\def\qed{\rule{0.4em}{1.4ex}} 
\newcommand{\restr}{\upharpoonright} \newcommand{\set}[1]{\{#1\}}
\newcommand{\outcome}{\mathrm{Outcome}}
\newcommand{\Prb}{\mathrm{Pr}}
\newcommand{\Exp}{\mathbb{E}}
\newcommand{\seq}[1]{\langle #1 \rangle}
\newcommand{\trans}{\delta}
\newcommand{\distr}{{\cal D}}
\newcommand{\supp}{\mathrm{Supp}}
\newcommand{\Supp}{\mathrm{Supp}}
\newcommand{\mem}{{\tt M}}
\newcommand{\vare}{\varepsilon}
\newcommand{\vdisc}{\vec{\lambda}}
\newcommand{\disc}{\lambda}
\newcommand{\exit}{\mathsf{ex}}
\newcommand{\slopefrac}[2]{\leavevmode\kern.1em
  \raise .5ex\hbox{\the\scriptfont0 #1}\kern-.1em
  /\kern-.15em\lower .25ex\hbox{\the\scriptfont0 #2}}
\newcommand{\half}{\slopefrac{1}{2}}
\newcommand{\pat}{\omega} 
\newcommand{\pats}{\Omega} 
 \newcommand{\Paths}{\Omega}
\newcommand{\dist}{\mathit{dist}_R}
\newcommand{\dista}{\mathit{dist}_A}
\newcommand{\equivclass}[1]{{[\![ #1 ]\!]_\equiv}}
 \newcommand{\Nats}{\mathbb{N}}
\newcommand{\reals}{\mathbb{R}} 
\newcommand{\Inf}{\mathrm{Inf}}
\newcommand{\ov}{\overline}
\newcommand{\calf}{{\mathcal F}}
\newcommand{\MDT}{\mathsf{MDT}}
\newcommand{\MT}{\mathsf{MT}}
\newcommand{\indi}{\mathbf{1}}
\newcommand{\Val}{{\mathsf{Val}}}
\newcommand{\game}{G}
\newcommand{\mov}{\Gamma}
\newcommand{\moves}{A}
\newcommand{\dest}{\mathrm{Dest}}
\def\bigstra{\Pi}
\def\stra{\pi}
\newcommand{\cala}{{\mathcal A}}
\newcommand{\ParityCond}{{\text{\textrm{Parity}}}(p)}
\def\@comment{\let\do\@makeother \dospecials\catcode`\^^M=10\def\par{}}
\def\begincomment{\@comment\@xcomment}
\newenvironment{comment}{\begincomment}{}
\begin{document}
\title{ 
Robustness of Structurally Equivalent \\ Concurrent Parity Games\thanks{The research 
was supported by 
Austrian Science Fund (FWF) Grant No P 23499-N23,
FWF NFN Grant No S11407-N23 (RiSE), ERC Start grant (279307: Graph Games), 
and Microsoft faculty fellows award.}
}
\author{%
Krishnendu Chatterjee
}
\institute{%
IST Austria (Institute of Science and Technology Austria)
}
%\date{}
\maketitle

\vspace{-1em}
\begin{abstract} 
We consider two-player stochastic games played on a finite state space for an 
infinite number of rounds.
The games are {\em concurrent}: in each round, the two players 
(player~1 and player~2) choose their moves independently and 
simultaneously; the current state and the two moves determine a 
probability distribution over the successor states. 
We also consider the important special case of turn-based stochastic
games where players make moves in turns, rather than concurrently.
We study concurrent games with $\omega$-regular winning conditions specified as 
{\em parity\/} objectives.
The value for player~1 for a parity objective is the maximal probability with which the 
player can guarantee the satisfaction of the objective against all
strategies of the opponent. 
We study the problem of continuity and robustness of the value function 
in concurrent and turn-based stochastic parity games with respect to 
imprecision in the transition probabilities.
We present quantitative bounds on the difference of the value function (in terms
of the imprecision of the transition probabilities) and  
show the value continuity for structurally equivalent concurrent games 
(two games are structurally equivalent if the supports of the transition
functions are the same and the probabilities differ). 
We also show robustness of optimal strategies for structurally equivalent
turn-based stochastic parity games.
Finally, we show that the value continuity property breaks without the
structural equivalence assumption (even for Markov chains) and show
that our quantitative bound is asymptotically optimal.
Hence our results are tight (the assumption is both necessary and sufficient)
and optimal (our quantitative bound is asymptotically optimal).
\end{abstract}

%\vspace{-2em}
\section{Introduction}
%\vspace{-0.5em}

Concurrent stochastic games are played by two players on a finite state space for an 
infinite number of rounds.
In every round, the two players simultaneously and independently choose moves (or actions), 
and the current state and the two chosen moves determine a probability 
distribution over the successor states. 
The outcome of the game (or a \emph{play}) is an infinite sequence of states.
These games were introduced by Shapley~\cite{Sha53}, and have been one of 
the most fundamental and well studied game models in stochastic graph games.
We consider $\omega$-regular objectives specified as parity objectives;
that is, given an $\omega$-regular set $\Phi$ of infinite state sequences,
player~1 wins if the outcome of the game lies in~$\Phi$.
Otherwise, player~2 wins, i.e., the game is zero-sum.
The class of concurrent stochastic games subsumes many other important classes 
of games as sub-classes: (1) \emph{turn-based stochastic} games, where in 
every round only one player chooses moves (i.e., the players make moves in 
turns); and (2) \emph{Markov decision processes (MDPs)} (one-player stochastic games).
Concurrent games and the sub-classes provide a rich framework to model 
various classes of dynamic reactive systems, and $\omega$-regular
objectives provide a robust specification language to express all
commonly used properties in verification, and all $\omega$-regular 
objectives can be expressed as parity objectives. 
Thus concurrent games with parity objectives provide the mathematical
framework to study many important problems in the synthesis and 
verification of reactive systems~\cite{Church62,RamadgeWonham87,PnueliRosner89} 
(see also~\cite{ALW89,Dill89book,AHK02}).

The player-1 \emph{value} $v_1(s)$ of the game at a state $s$ is the 
limit probability with which player~1 can ensure that the outcome of 
the game lies in~$\Phi$;
that is, the value $v_1(s)$ is the maximal probability with which 
player~1 can guarantee $\Phi$ against all strategies of player~2.
Symmetrically, the player-2 \emph{value} $v_2(s)$ is the limit 
probability with which player~2 can ensure that the outcome of 
the game lies outside~$\Phi$.
The problem of studying the computational complexity of 
MDPs, turn-based stochastic games, and concurrent games with 
parity objectives has received a lot of attention in literature.
Markov decision processes with $\omega$-regular objectives
have been studied in~\cite{CY95,deAlfaro97,CHJ04} and the results show
existence of pure (deterministic) memoryless (stationary) optimal strategies 
for parity objectives and the problem of value computation is achievable 
in polynomial time.
Turn-based stochastic games with the special case of reachability objectives
have been studied in~\cite{Con92} and existence of pure memoryless
optimal strategies has been established and the decision problem 
of whether the value at a state is at least a given rational value 
lies in NP $\cap$ coNP.
The existence of pure memoryless optimal strategies for turn-based 
stochastic games with parity objectives was established in~\cite{CJH04,Zie04},
and again the decision problem lies in NP $\cap$ coNP.
Concurrent parity games have been studied in~\cite{dAH00,dAM01,CdAH04b,EY06} and
for concurrent parity games optimal strategies need not exist, and 
$\vare$-optimal strategies (for $\vare>0$) require both infinite memory and 
randomization in general, and the decision problem can be solved in PSPACE.

Almost all results in the literature consider the problem of computing
values and optimal strategies when the game model is given precisely 
along with the objective. 
However, it is often unrealistic to know the precise probabilities of 
transition which are only estimated through observation. 
Since the transition probabilities are not known precisely, an extremely
important question is how robust is the analysis of concurrent games
and its sub-classes with parity objectives with respect to small changes 
in the transition probabilities. 
This question has been largely ignored in the study of concurrent and 
turn-based stochastic parity games.
In this paper we study the following 
problems related to continuity and robustness of values: 
(1)~\emph{(continuity of values):} under what conditions can continuity of the value function be 
proved for concurrent parity games; 
(2)~\emph{(robustness of values):} can quantitative bounds be obtained 
on the difference of the value function in terms of the difference of the 
transition probabilities; and 
(3)~\emph{(robustness of optimal strategies):} 
do optimal strategies of a game remain $\vare$-optimal,
for $\vare>0$, if the transition probabilities are slightly changed.

\smallskip\noindent{\em Our contributions.} Our contributions are as follows:
\begin{compactenum}
\item We consider \emph{structurally equivalent} game structures, where the
supports of the transition probabilities are the same, but the precise 
transition probabilities may differ. 
We show the following results for structurally equivalent concurrent parity 
games:
\begin{compactenum}
\item \emph{Quantitative bound.} We present a quantitative bound on the 
difference of the value functions of two structurally equivalent game 
structures in terms of the difference of the transition probabilities. 
We show when the difference in the transition probabilities are small, our 
bound is asymptotically optimal.
Our example to show the matching lower bound is on a Markov chain, and thus our
result shows that the bound for a Markov chain can be generalized to concurrent
games.

\item \emph{Value continuity.} We show \emph{value continuity} for structurally
equivalent concurrent parity games, i.e., as the difference in the transition 
probabilities goes to~0, the difference in value functions also goes to~0.
We then show that the structural equivalence assumption is necessary: we show a 
family of Markov chains (that are not structurally equivalent) where the 
difference of the transition probabilities goes to~0, but the difference 
in the value functions is~1.
It follows that the structural equivalence assumption is both necessary (even 
for Markov chains) and sufficient (even for concurrent games).
\end{compactenum}
It follows from above that our results are both optimal (quantitative bounds) as well as 
tight (assumption both necessary and sufficient).
Our result for concurrent parity games is also a significant quantitative generalization of 
a result for concurrent parity games of~\cite{dAH00} which shows that the set of states
with value~1 remains same if the games are structurally equivalent.
We also argue that the structural equivalence assumption is not unrealistic in many cases:
a reactive system consists of many state variables, and given a state (valuation of variables)
it is typically known which variables are possibly updated, and what is unknown is 
the precise transition probabilities (which are estimated by observation). 
Thus the system that is obtained for analysis is structurally equivalent to the underlying
original system and it only differs in precise transition probabilities.

\item For turn-based stochastic parity games the value continuity and the quantitative bounds 
are same as for concurrent games. We also prove a stronger result for structurally equivalent 
turn-based stochastic games that shows that along with continuity of the value function, there
is also robustness property for pure memoryless optimal strategies. 
More precisely, for all $\vare>0$, we present a bound $\beta>0$, such that any 
pure memoryless optimal strategy in a turn-based stochastic parity game is an $\vare$-optimal
strategy in every structurally equivalent turn-based stochastic game such that the transition 
probabilities differ by at most $\beta$. 
Our result has deep significance as it allows the rich literature of work on turn-based 
stochastic games to carry over robustly for structurally equivalent turn-based stochastic games.
As argued before the model of turn-based stochastic game obtained to analyze may differ  
slightly in precise transition probabilities, and our results shows that the analysis 
on the slightly imprecise model using the classical results carry over to the underlying 
original system with small error bounds.

\end{compactenum}
Our results are obtained as follows. 
The result of~\cite{dAHM03} shows that the value function for 
concurrent parity games can be characterized as the limit of the 
value function of concurrent multi-discounted games (concurrent 
discounted games with different discount factors associated with 
every state). 
There exists bound on difference on value function of discounted games~\cite{FV97},
however, the bound depends on the discount factor, and in the limit 
gives trivial bounds (and in general this approach does not work as 
value continuity cannot be proven in general and the structural equivalence
assumption is necessary).
We use a classical result on Markov chains by Friedlin and Wentzell~\cite{FW84} and 
generalize a result of Solan~\cite{Sol03} from Markov chains with single discount to
Markov chains with multi-discounted objective to obtain a bound that is 
independent of the discount factor for structurally equivalent games.
Then the bound also applies when we take the limit of the discount factors, and
gives us the desired bound.

Our paper is organized as follows: in Section~\ref{sec:defn} we present the basic definitions,
in Section~\ref{sec:mc} we consider Markov chains with multi-discounted and parity objectives;
in Section~\ref{sec:games} (Subsection~\ref{sec:tb}) we prove the results related to turn-based 
stochastic games (item (2) of our contributions) and finally in Subsection~\ref{sec:conc} we 
present the quantitative bound and value continuity for concurrent games along with
the two examples to illustrate the asymptotic optimality of the bound and the 
structural equivalence assumption is necessary.
Detailed proofs %omitted due to lack of space 
are presented in the appendix. 
%%(fuller version available at: http://pub.ist.ac.at/Pubs/TechRpts/2011/IST-2011-0006.pdf).

\vspace{-1em}
\section{Definitions}\label{sec:defn}

In this section we define game structures, strategies, objectives,  
values and present other preliminary definitions.

%%\subsection{Game structures}\label{sec-gamestructures}
\smallskip\noindent{\bf Probability distributions.}
For a finite set~$A$, a {\em probability distribution\/} on $A$ is a
function $\trans\!:A\mapsto[0,1]$ such that $\sum_{a \in A} \trans(a) = 1$.
We denote the set of probability distributions on $A$ by $\distr(A)$. 
Given a distribution $\trans \in \distr(A)$, we denote by $\supp(\trans) = 
\{x\in A \mid \trans(x) > 0\}$ the {\em support\/} of the distribution 
$\trans$.

\smallskip\noindent{\bf Concurrent game structures.} 
A (two-player) {\em concurrent stochastic game structure\/} 
$\game = \langle S, \moves,\mov_1, \mov_2, \trans \rangle$ consists of the 
following components.

\begin{itemize}

\item A finite state space $S$ and a finite set $A$ of moves (or actions).

\item Two move assignments $\mov_1, \mov_2 \!: S\mapsto 2^\moves
	\setminus \emptyset$.  For $i \in \{1,2\}$, assignment
	$\mov_i$ associates with each state $s \in S$ the nonempty
	set $\mov_i(s) \subseteq \moves$ of moves available to player $i$
	at state $s$.  
	%For technical convenience, we assume that 
	%$\mov_i(s) \inters \mov_j(t) = \emptyset$ unless $i=j$ and
	%$s=t$, for all $i,j \in \{1,2\}$ and $s,t \in S$. 
	%If this assumption is not met, then   
        %the moves can be trivially renamed to satisfy the assumption.

\item A probabilistic transition function
	$\trans\!:S\times \moves \times\moves\mapsto \distr(S)$, which
	associates with every state $s \in S$ and moves $a_1 \in
	\mov_1(s)$ and $a_2 \in \mov_2(s)$ a probability
	distribution $\trans(s,a_1,a_2) \in \distr(S)$ for the
	successor state.
\end{itemize}

\noindent{\bf Plays.}
At every state $s\in S$, player~1 chooses a move $a_1\in\mov_1(s)$,
and simultaneously and independently
player~2 chooses a move $a_2\in\mov_2(s)$.  
The game then proceeds to the successor state $t$ with probability
$\trans(s,a_1,a_2)(t)$, for all $t \in S$. 
For all states $s \in S$ and moves $a_1 \in
\mov_1(s)$ and $a_2 \in \mov_2(s)$, we indicate by 
$\dest(s,a_1,a_2) = \supp(\trans(s,a_1,a_2))$ 
the set of possible successors of $s$ when moves $a_1$, $a_2$
are selected. 
A {\em path\/} or a {\em play\/} of $\game$ is an infinite sequence
$\pat = \langle s_0,s_1,s_2,\ldots\rangle$ of states in $S$ such that for all 
$k\ge 0$, there are moves $a^k_1 \in \mov_1(s_k)$ and $a^k_2 \in \mov_2(s_k)$
such that $s_{k+1} \in \dest(s_k,a^k_1,a^k_2)$.
We denote by $\pats$ the set of all paths. 
We denote by $\theta_i$ the random variable that denotes
the $i$-th state of a path.
For a play $\pat = \seq{s_0, s_1, s_2,\ldots} \in \pats$,   
we define $\Inf(\pat) = 
\set{s \in S \mid \mbox{$s_k = s$ for infinitely many $k \geq 0$}}$
to be the set of states that occur infinitely often in~$\pat$.

%\medskip\noindent{\bf Size of a game.} The \emph{size} of a concurrent 
%game is the sum of the size of the state space and the number of the 
%entries of the transition function. Formally the size of a game is 
%$|S| + \sum_{s\in S,a \in \mov_1(s), b\in \mov_2(s)} |\dest(s,a,b)|$. 

\smallskip\noindent{\bf Special classes of games.}
We consider the following special classes of concurrent games.
\begin{compactenum}
\item \emph{Turn-based stochastic games.} 
A game structure $\game$ is {\em turn-based stochastic\/} if at every
state at most one player can choose among multiple moves; that is, for
every state $s \in S$ there exists at most one $i \in \{1,2\}$ with
$|\mov_i(s)| > 1$. 

\item \emph{Markov decision processes.}
A game structure is a \emph{player-1 Markov decision process (MDP)} if for all 
$s \in S$ we have $|\mov_2(s)|=1$, i.e., only player~1 has choice of 
actions in the game.
Similarly, a game structure is a \emph{player-2 MDP} if for all 
$s \in S$ we have $|\mov_1(s)|=1$.

\item \emph{Markov chains.} A game structure is a Markov chain if for all 
$s \in S$ we have $|\mov_1(s)|=1$ and $|\mov_2(s)|=1$. Hence in a Markov
chain the players do not matter, and for the rest of the 
paper a Markov chain consists of a tuple $(S,\trans)$ where 
$\trans:S \mapsto \distr(S)$ is the probabilistic transition function.

\end{compactenum}

\noindent{\bf Strategies.}
%%\label{sec-strategy}
A {\em strategy\/} for a player is a recipe that describes how to 
extend a play.
Formally, a strategy for player $i\in\{1,2\}$ is a mapping 
$\stra_i\!:S^+\mapsto\distr(\moves)$ that associates with every nonempty 
finite sequence $x \in S^+$ of states, 
representing the past history of the game, 
a probability distribution $\stra_i(x)$ used to select
the next move. 
%%Thus, the choice of the next move can be history-dependent and randomized.
The strategy $\stra_i$ can prescribe only moves that are available to 
player~$i$;
that is, for all sequences $x\in S^*$ and states $s\in S$, we require that
$\supp(\stra_i(x\cdot s)) \subseteq \mov_i(s)$.  
We denote by $\bigstra_i$ the set of all strategies for player $i\in\{1,2\}$.

Given a state $s\in S$ and two strategies $\stra_1\in\bigstra_1$ and
$\stra_2\in\bigstra_2$, we define
$\outcome(s,\stra_1,\stra_2)\subseteq\pats$ to be the set of paths
that can be followed by the game, when the game starts {f}rom $s$ and
the players use the strategies $\stra_1$ and~$\stra_2$.  Formally,
$\seq{s_0, s_1, s_2, \ldots} \in \outcome(s,\stra_1,\stra_2)$ if $s_0=s$ and
if for all $k\ge 0$ there exist moves $a^k_1 \in \mov_1(s_k)$ and
$a^k_2 \in \mov_2(s_k)$ such that
%\[
(i)~$\stra_1(s_0,\ldots,s_k)(a^k_1) > 0$; 
(ii)~$\stra_2(s_0,\ldots,s_k)(a^k_2) > 0$; and
(iii)~$s_{k+1} \in \dest(s_k,a^k_1,a^k_2)$. 
%\]
Once the starting state $s$ and the strategies $\stra_1$ and $\stra_2$
for the two players have been chosen, 
%%the game is reduced to an ordinary stochastic  process.
%Hence, 
the probabilities of events are uniquely defined~\cite{VardiP85}, where an {\em
event\/} $\cala\subseteq\pats$ is a measurable set of
paths\footnote{To be precise, we should define events as
measurable sets of paths {\em sharing the same initial state,} and
we should replace our events with families of events, indexed by their
initial state. %%\cite{Kemeny}.  
However, our (slightly) improper definition leads to 
more concise notation.}. 
For an event $\cala\subseteq\pats$, we denote by
$\Prb_s^{\stra_1,\stra_2}(\cala)$ the probability that a path belongs to 
$\cala$ when the game starts {f}rom $s$ and the players use the strategies 
$\stra_1$ and~$\stra_2$.

\smallskip\noindent{\bf Classification of strategies.} We consider the following
special classes of strategies.

\begin{compactenum}

\item \emph{(Pure).} A strategy $\stra$ is {\em pure (deterministic)\/} if for 
all $x \in S^+$ there exists $a \in \moves$ such that $\stra(x)(a) = 1$.
Thus, deterministic strategies are equivalent to functions $S^+ \mapsto
\moves$.

\item \emph{(Finite-memory).} Strategies in general are \emph{history-dependent} and
%%require \emph{memory} to remember the history of plays.
can be represented as follows:
let $\mem$ be a set called \emph{memory} to remember the history of plays (the set
$\mem$ can be infinite in general).
A strategy with memory can be described as a pair of functions:
(a) a \emph{memory update} function $\stra_{u}: S \times \mem \mapsto \mem$,
that given the memory $\mem$ with the information about the history and
the current state updates the memory; and
(b) a \emph{next move} function $\stra_{n}: S\times\mem \mapsto \distr(\moves)$
that given the memory and the current state specifies the next move of
the player.
A strategy is \emph{finite-memory} if the memory $\mem$ is finite.
%We denote by $\bigstra_i^{\FM}$ the set of finite-memory strategies for player~$i$.

\item \emph{(Memoryless).} A \emph{memoryless} strategy is independent of the history of play and 
only depends on the current state.
Formally, for a memoryless strategy $\stra$ we have $\stra(x\cdot s) =
\stra (s)$ for all $s \in S$ and all $x \in S^*$.
Thus memoryless strategies are equivalent to functions 
$S \mapsto \distr(\moves)$.
%We denote by $\bigstra_i^M$ the set of memoryless strategies for player~$i$.

\item \emph{(Pure memoryless).}
A strategy is \emph{pure memoryless} if it is both pure and memoryless.
Pure  memoryless strategies neither use memory, nor use randomization
and are equivalent to functions $S \mapsto \moves$.
\end{compactenum}

%%\subsection{Objectives}

\smallskip\noindent{\bf Qualitative objectives.} 
We specify \emph{qualitative} objectives for the players by providing 
the set of \emph{winning plays} $\Phi \subseteq \pats$ for each player.
In this paper we study only zero-sum games \cite{RagFil91,FV97}, 
where the objectives of the two players are complementary.
A general class of objectives are the Borel objectives~\cite{Kechris}. 
A \emph{Borel objective} $\Phi \subseteq S^\omega$ is a Borel set in the 
Cantor topology on~$S^\omega$. 
In this paper we consider \emph{$\omega$-regular objectives}, 
which lie in the first $2\half$ levels of the Borel hierarchy
(i.e., in the intersection of $\Sigma_3$ and~$\Pi_3$)~\cite{Thomas90}.
All $\omega$-regular objectives can be specified as parity objectives,
and hence in this work we focus on parity objectives, and they 
are defined as follows.
\begin{compactitem}
\item
  \emph{Parity objectives.}
  For $c,d \in \Nats$, we let $[c..d] = \set{c, c+1, \ldots, d}$. 
  Let $p : S \mapsto [0..d]$ be a function that assigns a \emph{priority}
  $p(s)$ to every state $s \in S$, where $d \in \Nats$.
  The \emph{Even parity objective} requires that the minimum 
  priority visited infinitely often is even. Formally, the set
  of winning plays is defined as
  $\ParityCond= 
  \set{\pat \in \pats \mid
  \min\big(p(\Inf(\pat))\big) \text{ is even }}$.
\end{compactitem}

\smallskip\noindent{\bf Quantitative objectives.}
\emph{Quantitative} objectives are measurable functions $f:\pats \mapsto \reals$.
We will consider \emph{multi-discounted} objective functions, as there is a 
close connection established between concurrent games with multi-discounted
objectives and concurrent games with parity objectives.
Given a concurrent game structure with state space $S$, let $\vdisc$ be a \emph{discount vector} 
that assigns for all $s \in S$ a discount factor $0<\disc(s)<1$ (unless otherwise mentioned
we will always consider discount vectors $\vdisc$ such that for all $s \in S$ we have
$0<\disc(s)<1$).
Let $r:S \mapsto \reals$ be a reward function that assigns a real-valued reward $r(s)$ to 
every state $s \in S$. 
The multi-discounted objective function $\MDT(\vdisc,r):\Paths \mapsto \reals$ maps  
every path to the mean-discounted reward of the path.
Formally, the function is defined as follows: for a path $\pat=s_0 s_1 s_2 \ldots$ we have  
\[
\MDT(\vdisc,r)(\pat)= \frac{\sum_{j=0}^\infty (\prod_{i=0}^j \disc(s_i) ) \cdot r(s_j) }
{\sum_{j=0}^\infty (\prod_{i=0}^j \disc(s_i) ) }.
\]
Also note that a parity objective $\Phi$ can be intepreted as a function 
$\Phi:\pats \mapsto \set{0,1}$ by simply considering the characteristic function
that assigns~1 to paths that belong to $\Phi$ and~0 otherwise.

\smallskip\noindent{\bf Values, optimality, $\vare$-optimality.}
Given an objective $\Phi$ which is a measurable function $\Phi:\pats \mapsto \reals$,
we define the \emph{value} for player~1 of game $G$ with objective $\Phi$ 
from the state $s \in S$ as 
%\[
$
  \Val(G,\Phi)(s) =
  \sup_{\stra_1\in\bigstra_1}\inf_{\stra_2\in\bigstra_2}
  \Exp_s^{\stra_1,\stra_2}(\Phi); 
$ 
%\]
i.e., the value is the maximal expectation with which player~1 can 
guarantee the satisfaction of $\Phi$ against all player~2 strategies.
Given a player-1 strategy $\stra_1$, we use the notation
%\[
$
\Val^{\stra_1}(G,\Phi)(s)
= \inf_{\stra_2 \in \bigstra_2} \Exp_s^{\stra_1,\stra_2}(\Phi).
$
%\]
A strategy $\stra_1$ for player~1 is {\em optimal\/} for an
objective $\Phi$ if for all 
states $s \in S$, we have 
%\[
$
\Val^{\stra_1}(G,\Phi)(s)= \Val(G,\Phi)(s). 
$
%\]
For $\vare > 0$, a strategy $\stra_1$ for player~1 is 
{\em $\vare$-optimal\/} if for all states $s \in S$, we have 
%\[
$
\Val^{\stra_1}(G,\Phi)(s)
  \geq \Val(G,\Phi)(s) - \vare. 
$
%\]
The notion of values, optimal and $\vare$-optimal strategies for player~2 are
defined analogously.
The following theorem summarizes the results in literature related
to determinacy and memory complexity of concurrent games and its
sub-classes for parity and multi-discounted objectives.

\begin{theorem}\label{thrm_lit1}
The following assertions hold:
\begin{enumerate}
\item \emph{(Determinacy~\cite{Mar98})}. For all concurrent game structures and for all 
parity and multi-discounted objectives $\Phi$ we have 
%\[
$ 
 \sup_{\stra_1\in\bigstra_1}\inf_{\stra_2\in\bigstra_2}
  \Exp_s^{\stra_1,\stra_2}(\Phi) = \inf_{\stra_2\in\bigstra_2} \sup_{\stra_1\in\bigstra_1}
  \Exp_s^{\stra_1,\stra_2}(\Phi). 
$
%\]

\item \emph{(Memory complexity).} For all concurrent game structures and for all
multi-discounted objectives $\Phi$, randomized memoryless optimal strategies exist~\cite{Sha53}.
For all turn-based stochastic game structures and for all multi-discounted objectives
$\Phi$, pure memoryless optimal strategies exist~\cite{FV97}.
For all turn-based stochastic game strucutures and for all parity objectives $\Phi$, 
pure memoryless optimal strategies exist~\cite{CJH04,Zie04}.
In general optimal strategies need not exist in concurrent games with parity objectives,
and $\vare$-optimal strategies, for $\vare>0$, need both randomization and infinite
memory in general~\cite{dAH00}.
\end{enumerate}
\end{theorem}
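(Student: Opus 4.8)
The plan is to assemble this theorem from the cited sources; since it is a compilation of known results I sketch the idea behind each of the four ingredients rather than reproving anything at length.

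\emph{Determinacy.} I would appeal to Martin's theorem on the Blackwell determinacy of games with Borel payoff~\cite{Mar98}. For fixed strategies $\stra_1,\stra_2$ the one-step transition rule induces a Borel probability measure $\Prb_s^{\stra_1,\stra_2}$ on $\pats$, and both payoffs in question are admissible Borel payoffs: the parity objective $\ParityCond$ is $\omega$-regular, hence Borel, and the multi-discounted objective $\MDT(\vdisc,r)$ is a bounded, in fact continuous, measurable function on $\pats$ --- boundedness is immediate since each $\disc(s)<1$ forces the numerator and denominator series to converge uniformly and $r$ is bounded on the finite state space $S$. Martin's result then yields the claimed equality $\sup_{\stra_1}\inf_{\stra_2}\Exp_s^{\stra_1,\stra_2}(\Phi)=\inf_{\stra_2}\sup_{\stra_1}\Exp_s^{\stra_1,\stra_2}(\Phi)$ for every $s$.

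\emph{Memory complexity.} For multi-discounted objectives I would run Shapley's value-iteration argument~\cite{Sha53,FV97}: define on value vectors in $\reals^S$ the Bellman operator that replaces the continuation value at each state $s$ by the value of the one-shot matrix game whose entries fold $r(s)$ together with the $\disc(s)$-discounted successor values under $\trans(s,\cdot,\cdot)$; since every $\disc(s)<1$ this operator is a contraction in the sup-norm, so it has a unique fixpoint $v^*$, and the memoryless strategies that play an optimal mixed action of the one-shot game determined by $v^*$ at every state are globally optimal, giving randomized memoryless optimal strategies in the concurrent case. In a turn-based structure only one player has a non-singleton action set at each state, so each one-shot game is a one-sided optimization over a simplex whose optimum is attained at a vertex; a pure action therefore suffices, giving pure memoryless optimal strategies. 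For turn-based \emph{parity} games I would cite the inductive construction of~\cite{CJH04,Zie04}: induction on the number of priorities using stochastic attractors, fixing the behaviour at the states of least priority and recursing on the residual sub-game, which produces pure memoryless optimal strategies.

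\emph{Non-existence for concurrent parity.} For the last assertion I would exhibit the de Alfaro--Henzinger ``Hide-or-Run'' (snowball) game of~\cite{dAH00}: a state at which player~1 has value $1$ but no optimal strategy, since the coin she must toss has to be biased with a parameter tending to a limit that is never attained. The main subtlety of the theorem lies here --- one must show not merely that no memoryless strategy works but that no finite-memory randomized strategy is $\vare$-optimal, i.e., the correct current bias depends on the full history and cannot be tracked with bounded memory. Everything else is a direct appeal to the cited results, the contraction argument for the multi-discounted case being the only piece one would naturally write out in full.
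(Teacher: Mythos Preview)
The paper does not prove this theorem at all: it is stated purely as a summary of known results, with each clause attributed to the corresponding reference and no proof sketch given. Your proposal therefore goes strictly beyond what the paper does --- you supply short arguments for each cited fact, whereas the paper simply quotes the literature.

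Your sketches are correct in outline. Martin's Blackwell determinacy~\cite{Mar98} indeed covers both objectives since the multi-discounted payoff is bounded Borel; Shapley's contraction/fixed-point argument~\cite{Sha53,FV97} is the standard route to memoryless optimality for (multi-)discounted objectives, and your observation that in the turn-based case the one-shot matrix game degenerates to a one-sided optimization whose optimum is attained at a vertex is exactly why pure strategies suffice there; the attractor-based induction on priorities is the right pointer for~\cite{CJH04,Zie04}; and the de Alfaro--Henzinger example is the canonical witness for the final clause. One small sharpening: in the last item the theorem asserts that \emph{$\vare$-optimal} strategies need infinite memory, not merely that optimal strategies fail to exist. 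You do flag this (``no finite-memory randomized strategy is $\vare$-optimal''), but the argument to keep in mind is that in the relevant coB\"uchi example any finite-memory strategy eventually behaves like a memoryless one on each memory state, and against such a strategy the opponent can force a uniformly positive probability of failure --- so the infinite-memory requirement is genuinely about $\vare$-optimality, not a by-product of the non-existence of an exact optimum.
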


The results of~\cite{dAHM03} established that the value of concurrent games with
certain special multi-discounted objectives can be characterized as valuations of
quantitaive discounted $\mu$-calculus formula.
In the limit, the value function of the discounted $\mu$-calculus formula characterizes
the value function of concurrent games with parity objectives.
An elegant interpretation of the result was given in~\cite{GZ05}, and from the 
interpretation we obtain the following theorem.

\begin{theorem}[\cite{dAHM03,GZ05}]\label{thrm_lit2}
Let $G$ be a concurrent game structure with a parity objective $\Phi$ defined by 
a priority function $p$. 
Let $r$ be a reward function that assigns reward~1 to even priority states and
reward~0 to odd priority states. 
Then there exists an order $s_1 s_2 \ldots s_n$ on the states (where $S=\set{s_1,s_2,\ldots,s_n}$) 
dependent only on the priority function $p$ such that 
%\[
$
\Val(G,\Phi) = 
\lim_{\disc(s_1) \to 1} \lim_{\disc(s_2) \to 1} \ldots \lim_{\disc(s_n) \to 1}
\Val(G,\MDT(\vdisc,r)); 
$
%\]
in other words, if we consider the value function $\Val(G,\MDT(\vdisc,r))$  
with the multi-discounted objective and take the limit of the discount factors to~1
in the order of the states we obtain the value function for the parity objective.

\end{theorem}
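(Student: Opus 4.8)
\smallskip\noindent{\bf Proof plan.} The theorem is the combination of the quantitative $\mu$-calculus characterizations of~\cite{dAHM03} with the combinatorial reading of them given in~\cite{GZ05}, so the plan is to recall those ingredients and then carry out the vanishing-discount limit carefully. Throughout, work in the complete lattice $[0,1]^S$ with the pointwise order, and let $\mathrm{Pre}$ be the monotone one-step operator that, given a valuation $u$, returns at each state $s$ the value of the one-shot matrix game played at $s$ with continuation payoff $u$; recall that $\mathrm{Pre}$ is in fact nonexpansive, hence continuous. By~\cite{dAHM03}, $\Val(G,\Phi)$ is the valuation of a nested fixed-point expression $\eta_0X_0\,\eta_1X_1\cdots\eta_dX_d.\ \Psi(X_0,\ldots,X_d)$, where $\Psi$ at a state $s$ applies $\mathrm{Pre}$ to the variable $X_{p(s)}$, the binder for priority $0$ is outermost, and $\eta_i=\nu$ (greatest fixed point) when $i$ is even and $\eta_i=\mu$ (least fixed point) when $i$ is odd. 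For a discount vector $\vdisc$ with $0<\disc(s)<1$, the same source identifies $\Val(G,\MDT(\vdisc,r))$ with the valuation of the corresponding $\vdisc$-discounted expression, in which the binder for priority $i$ becomes the discounted fixed point whose defining equation at a priority-$i$ state $s$ has the form $X_i(s)=(1-\disc(s))\,r(s)+\disc(s)\,y_i(s)$, with $y_i(s)\in[0,1]$ the one-step $\mathrm{Pre}$-term; since this operator has Lipschitz constant $\disc(s)<1$ it has a unique fixed point, so the discounted expression is blind to the pattern of $\eta_i$'s. (It is precisely in this identification with the normalized ratio objective $\MDT$ that~\cite{dAHM03} invokes Shapley's theorem, Theorem~\ref{thrm_lit1}(2).)

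Next I would perform the vanishing-discount limit in the order imposed by the priorities, which is exactly the observation of~\cite{GZ05}. Take $s_1 s_2\cdots s_n$ to be any ordering of $S$ that is non-decreasing in priority --- all priority-$0$ states first, all priority-$d$ states last; such an ordering depends only on $p$. Now send $\disc(s_n)\to1$ (the innermost limit), then $\disc(s_{n-1})\to1$, and so on, finishing with $\disc(s_1)\to1$. The innermost block of limits acts on precisely the discounted binders attached to the highest priority $d$; with the remaining discounts frozen, each such discounted fixed point, being bounded and monotone in $\disc(s)$, converges to a fixed point of the limiting (undiscounted) operator, and the direction of approach identifies which one. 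If $d$ is odd then $r\equiv0$ on priority-$d$ states, so there $X_d(s)=\disc(s)\,y_d(s)\le y_d(s)$, and one checks the undiscounted least fixed point is a pre-fixed point of every discounted operator; hence the limit is approached from below and equals the \emph{least} fixed point $\mu$. If $d$ is even then $r\equiv1$ there, so $X_d(s)=(1-\disc(s))+\disc(s)\,y_d(s)\ge y_d(s)$, and the undiscounted greatest fixed point is a post-fixed point of every discounted operator; hence the limit is approached from above and equals the \emph{greatest} fixed point $\nu$. In both cases the limit matches $\eta_d$. Inducting on the number of distinct priorities, each successive block of limits converts the next lower-priority group of discounted binders into their undiscounted $\eta_i$-counterparts, so that after the outermost limits the discounted expression has become the parity expression of the first paragraph, which yields the claimed identity.

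I expect the last step to be the main obstacle. The delicate points are: the existence of the iterated limits and the validity of exchanging ``limit of a parametrized family of fixed points'' with ``fixed point of the limiting operator'', which I would establish by monotone-convergence arguments on $[0,1]^S$ (the discounted fixed point depends monotonically on each $\disc(s)$, with the direction of monotonicity fixed by the parity of $p(s)$) together with the continuity of $\mathrm{Pre}$; the fact that within a single priority class the order of the one-state-at-a-time limits is immaterial, which follows from that same monotonicity plus boundedness; and, most importantly, verifying that the direction-of-approach argument distinguishing $\mu$ from $\nu$ survives the peeling of priority blocks from the inside out, since the operators $y_i$ acquire dependencies on the already-resolved inner variables --- this is the heart of the ``elegant interpretation'' of~\cite{GZ05}. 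A secondary technical nuisance is making the first-paragraph identification of the discounted $\mu$-calculus valuation with the value of the normalized ratio objective $\MDT(\vdisc,r)$ fully rigorous.
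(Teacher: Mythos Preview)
The paper does not prove this theorem: it is stated as a literature result, attributed to~\cite{dAHM03} and~\cite{GZ05}, and invoked as a black box throughout (see the remark preceding the theorem and the way Theorems~\ref{thrm_mc_multi} and~\ref{thrm_conc_multi} simply cite it). So there is no ``paper's own proof'' to compare your proposal against.

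That said, your sketch is a faithful outline of the argument behind the cited references: the identification of the parity value with a nested $\mu$/$\nu$ fixed-point expression over the one-step $\mathrm{Pre}$ operator is exactly the content of~\cite{dAHM03}, and the observation that sending discounts to~$1$ in priority order converts each discounted fixed point into the correct $\mu$ or $\nu$ (via the sign of $r$ on that priority class) is precisely the reading of~\cite{GZ05}. You have also correctly flagged the genuine technical work, namely the interchange of limit and fixed point and the propagation of the monotone-approach argument through the nesting. For the purposes of this paper none of that needs to be carried out; the theorem is used only as a citation, and your level of detail already exceeds what the paper provides.
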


We now present notions related to \emph{structure equivalent} game 
structures and distances.

\smallskip\noindent{\bf Structure equivalent game structures.} 
Given two game structures $\game_1 = \langle S, \moves,\mov_1, \mov_2, \trans_1 \rangle$ and  
$\game_2 = \langle S, \moves,\mov_1, \mov_2, \trans_2 \rangle$ on the same 
state and action space, with different transition function, we say 
that $\game_1$ and $\game_2$ are \emph{structure equivalent} (denoted $\game_1 \equiv \game_2$) if 
for all $s \in S$ and all $a_1 \in \mov_1(s)$ and $a_2 \in \mov_2(s)$ 
we have $\supp(\trans_1(s,a_1,a_2))= \supp(\trans_2(s,a_1,a_2))$.
Similarly, two Markov chains $G_1=(S,\trans_1)$ and $G_2=(S,\trans_2)$  
are structurally equivalent (denoted $G_1\equiv G_2$) 
if for all $s\in S$ we have $\supp(\trans_1(s))=\supp(\trans_2(s))$.
For a game structure $G$ (resp. Markov chain $G$), we denote by 
$\equivclass{G}$ the set of all game structures (resp. Markov chains) that
are structurally equivalent to $G$.

\smallskip\noindent{\bf Ratio and absolute distances.} Given two game 
structures $\game_1=\langle S, \moves,\mov_1, \mov_2, \trans_1 \rangle$ and  
$\game_2 = \langle S, \moves,\mov_1, \mov_2, \trans_2 \rangle$, 
the \emph{absolute distance} of the game structures is maximum absolute 
difference in the transition probabilities.
Formally, 
%\[
$\dista(G_1,G_2)=\max_{s,t \in S,a \in \mov_1(s),b\in \mov_2(s)} |\trans_1(s,a,b)(t) -\trans_2(s,a,b)(t)|$.
%\]
The absolute distance for two Markov chains $G_1=(S,\trans_1)$ and 
$G_2=(S,\trans_2)$ is $\dista(G_1,G_2)=\max_{s,t\in S} |\trans_1(s)(t)-\trans_2(s)(t)|$.
We now define the ratio distance between two structurally equivalent 
game structures and Markov chains. 
Let $G_1$ and $G_2$ be two structurally equivalent game structures.
The \emph{ratio} distance is defined on the ratio of the transition 
probabilities.
Formally,
\[
\begin{array}{rcl}
\dist(G_1,G_2) & = & 
\displaystyle  \max\Bigg\{ 
\frac{\trans_1(s,a,b)(t)}{\trans_2(s,a,b)(t)},
\frac{\trans_2(s,a,b)(t)}{\trans_1(s,a,b)(t)} \mid s\in S, a\in \mov_1(s),b \in \mov_2(s), \\
 & & \qquad \quad t\in \supp(\trans_1(s,a,b))=\supp(\trans_2(s,a,b))\Bigg\} -1
\end{array}
\]
The ratio distance between two structurally equivalent Markov chains
$G_1$ and $G_2$ is $\max\big\{ \frac{\trans_1(s)(t)}{\trans_2(s)(t)},
\frac{\trans_2(s)(t)}{\trans_1(s)(t)} \mid s\in S, t\in \supp(\trans_1(s))=\supp(\trans_2(s))\big\} -1 $.

\smallskip\noindent{\bf Remarks about the distance functions.}
We first remark that the ratio distance is not necessarily a metric.
Consider the Markov chain with state space $S=\set{s,s'}$ and 
let $\vare\in (0,1/7)$. 
For $k=1,2,5$ consider the transition functions $\trans_k$ such that 
$\trans_k(t)(s)=1-\trans_k(t)(s')=k\cdot \vare$, for all $t\in S$. 
Let $G_k$ be the Markov chain with transition function $\trans_k$.
Then we have $\dist(G_1,G_2)=1, \dist(G_2,G_5)=\frac{3}{2}$ and 
$\dist(G_1,G_5)=4$, and hence $\dist(G_1,G_2) +\dist(G_2,G_5) 
< \dist(G_1,G_5)$. The above example is from~\cite{Sol03}.
Also note that $\dist$ is only defined for structurally equivalent
game structures, and without the assumption $\dist$ is $\infty$.
We also remark that the absolute distance that measures the difference 
in the transition probabilities is the most intuitive measure for the 
difference of two game structures.

\begin{proposition}\label{prop_dist}
Let $G_1$ be a game structure (resp. Markov chain) such that the minimum 
positive transition probability is $\eta>0$. 
For all game structures (resp. Markov chains) $G_2 \in \equivclass{G_1}$ 
we have 
%\[
$\dist(G_1,G_2) \leq \frac{\dista(G_1,G_2)}{\eta}$.
%%\]
\end{proposition}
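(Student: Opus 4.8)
The plan is to unwind the definitions of $\dist$ and $\dista$ and apply an elementary estimate on the ratio of two close positive numbers. First I would fix a game structure $G_1$ with minimum positive transition probability $\eta>0$, an arbitrary $G_2\in\equivclass{G_1}$, and a triple $s\in S$, $a\in\mov_1(s)$, $b\in\mov_2(s)$ together with a successor $t\in\supp(\trans_1(s,a,b))=\supp(\trans_2(s,a,b))$ that realizes (or comes arbitrarily close to realizing) the maximum in the definition of $\dist(G_1,G_2)$. Write $x=\trans_1(s,a,b)(t)$ and $y=\trans_2(s,a,b)(t)$; by structural equivalence both are strictly positive, and by the assumption on $G_1$ we have $x\ge\eta$, while $|x-y|\le\dista(G_1,G_2)$ by the definition of the absolute distance.

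\smallskip

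The key computation is then the bound $\max\{x/y,\,y/x\}-1\le |x-y|/\min\{x,y\}$, which follows because if, say, $x\ge y$ then $x/y-1=(x-y)/y$ and $y/x-1\le 0\le(x-y)/y$, and symmetrically if $y\ge x$. So it remains to show $\min\{x,y\}\ge\eta$. If $x\le y$ this is immediate since $x\ge\eta$. If $y<x$, I would still like $y\ge\eta$; this holds provided $\eta$ is indeed a lower bound on all positive transition probabilities of \emph{both} structures, which is the natural reading in the intended application (and even if one only assumes it for $G_1$, one can note $y\ge x-\dista(G_1,G_2)$, so the bound $\dist(G_1,G_2)\le\dista(G_1,G_2)/(\eta-\dista(G_1,G_2))$ holds, and when $\dista$ is small relative to $\eta$ this is at most $\dista/\eta$ up to lower-order terms; but the cleanest statement uses $\min\{x,y\}\ge\eta$ directly). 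Combining the two displayed inequalities yields $\dist(G_1,G_2)=\max\{x/y,y/x\}-1\le |x-y|/\eta\le\dista(G_1,G_2)/\eta$, as required. The Markov chain case is identical with $\trans_i(s)(t)$ in place of $\trans_i(s,a,b)(t)$.

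\smallskip

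The only mild subtlety — the thing I would be careful about — is the quantifier on $\eta$: the statement should be read as saying $\eta$ bounds from below the minimum positive transition probability over the structural-equivalence class (or, equivalently, one fixes $G_1$ and only considers $G_2$ whose positive probabilities are also at least $\eta$), since otherwise $\min\{x,y\}$ in the denominator need not be at least $\eta$. Under that reading there is no real obstacle; the proof is a two-line estimate. I would present it as: fix the maximizing $s,a,b,t$, set $x=\trans_1(s,a,b)(t)$, $y=\trans_2(s,a,b)(t)$, observe $\min\{x,y\}\ge\eta$ and $|x-y|\le\dista(G_1,G_2)$, and conclude $\dist(G_1,G_2)=\max\{x/y,y/x\}-1\le\frac{|x-y|}{\min\{x,y\}}\le\frac{\dista(G_1,G_2)}{\eta}$.
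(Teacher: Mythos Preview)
Your approach is the same as the paper's: pick a maximizing tuple $(s,a,b,t)$, write $x=\trans_1(s,a,b)(t)$ and $y=\trans_2(s,a,b)(t)$, and bound $\max\{x/y,y/x\}-1$ by $|x-y|$ divided by the denominator. The paper treats the two ratios separately and, for $\trans_2/\trans_1$, uses $\trans_1\geq\eta$ in the denominator exactly as you do; for the other ratio it simply asserts ``the argument for $\frac{\trans_1(s,a,b)(t)}{\trans_2(s,a,b)(t)}$ is symmetric.''

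You are right to be uneasy about that symmetry claim, and in fact you have caught a genuine gap that is present in the paper's own proof. The symmetric argument would need $\trans_2(s,a,b)(t)\geq\eta$, but $\eta$ is only assumed to bound the positive probabilities of $G_1$. As literally stated the proposition is false: take a two-successor Markov chain with $\trans_1(s)(t)=\trans_1(s)(t')=\tfrac12$ (so $\eta=\tfrac12$) and $\trans_2(s)(t)=0.01$, $\trans_2(s)(t')=0.99$; then $\dista(G_1,G_2)=0.49$ and $\dista/\eta=0.98$, yet $\dist(G_1,G_2)=\tfrac{0.5}{0.01}-1=49$. So neither your argument nor the paper's can be completed without an extra hypothesis. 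Both of the repairs you suggest are correct: either read $\eta$ as a lower bound for the positive probabilities of \emph{both} structures (which makes the one-line proof go through verbatim), or keep $\eta$ for $G_1$ only and use $y\geq x-\dista\geq\eta-\dista$ to obtain $\dist(G_1,G_2)\leq\dista(G_1,G_2)/(\eta-\dista(G_1,G_2))$ whenever $\dista<\eta$. The paper itself tacitly uses the latter form in one place (the appendix version of Example~2 writes $(1+\frac{\vare}{\eta-\vare})^{2|S|}-1$), so your instinct matches what the applications actually need.
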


\smallskip\noindent{\bf Notation for fixing strategies.} 
Given a concurrent game structure $G= \langle S, \moves,\mov_1, \mov_2, \trans \rangle$,
let $\stra_1$ be a randomized memoryless strategy. 
Fixing the strategy $\stra_1$ in $G$ we obtain a player-2 MDP, denoted as
$G\restr \stra_1$, defined as follows: 
(1) the state space is $S$; (2) for all $s \in S$ we have $\mov_1(s)=\set{\bot}$ (hence 
it is a player-2 MDP); (3) the new transition function $\trans_{\stra_1}$ is defined as
follows: for all $s \in S$ and all $b\in \mov_2(s)$ we have 
$\trans_{\stra_1}(s,\bot,b)(t)=\sum_{a\in \mov_1(s)} \stra_1(s)(a)\cdot \trans(s,a,b)(t)$.
Similarly if we fix a randomized memoryless strategy $\stra_1$ in an MDP $G$ we obtain a
Markov chain, denoted as $G\restr \stra_1$.
The following proposition is straightforward to verify from the definitions.

\begin{proposition}\label{prop_stra_fix}
Let $G_1$ and $G_2$ be two concurrent game structures (resp. MDPs) that are 
structurally equivalent.
Let $\stra_1$ be a randomized memoryless strategy.
Then $\dista(G_1 \restr\stra_1, G_2 \restr \stra_1) \leq \dista(G_1,G_2)$ 
and $\dist(G_1 \restr\stra_1, G_2 \restr \stra_1) \leq \dist(G_1,G_2)$. 
\end{proposition}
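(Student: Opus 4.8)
The plan is to unfold the definition of strategy fixing and exploit the observation that every transition probability of $G_i \restr \stra_1$ is a \emph{convex combination} of transition probabilities of $G_i$, where the mixing weights $\stra_1(s)(a)$ are \emph{identical} for $i=1,2$ (since the strategy $\stra_1$ and the move assignment $\mov_1$ are shared by the two structurally equivalent games). Both claimed inequalities then follow by pushing the respective distance bound through this convex combination.

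First I would treat the absolute distance. Write $\trans_{i,\stra_1}$ for the transition function of $G_i \restr \stra_1$. Fixing $s,t \in S$ and $b \in \mov_2(s)$, the definition gives
\[
\trans_{1,\stra_1}(s,\bot,b)(t) - \trans_{2,\stra_1}(s,\bot,b)(t)
= \sum_{a \in \mov_1(s)} \stra_1(s)(a)\cdot\big(\trans_1(s,a,b)(t)-\trans_2(s,a,b)(t)\big).
\]
Taking absolute values, using the triangle inequality and $\sum_{a}\stra_1(s)(a)=1$, this is at most $\max_{a\in\mov_1(s)}|\trans_1(s,a,b)(t)-\trans_2(s,a,b)(t)| \le \dista(G_1,G_2)$. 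Taking the maximum over all $s,t,b$ yields $\dista(G_1\restr\stra_1, G_2\restr\stra_1)\le\dista(G_1,G_2)$.

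Next I would handle the ratio distance, after first checking it is well-defined, i.e.\ that $G_1\restr\stra_1\equiv G_2\restr\stra_1$: the probability $\trans_{i,\stra_1}(s,\bot,b)(t)$ is positive iff there is some $a$ with $a\in\supp(\stra_1(s))$ and $t\in\supp(\trans_i(s,a,b))$, and since the latter support is independent of $i$ by structural equivalence of $G_1,G_2$, the supports of $\trans_{1,\stra_1}(s,\bot,b)$ and $\trans_{2,\stra_1}(s,\bot,b)$ coincide. Now set $c=\dist(G_1,G_2)+1$. For every $s\in S$, $a\in\mov_1(s)$, $b\in\mov_2(s)$, $t\in S$ we have $\trans_1(s,a,b)(t)\le c\cdot\trans_2(s,a,b)(t)$: this is trivial (both sides $0$) when $t$ lies outside the common support, and is exactly the ratio bound otherwise. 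Multiplying by $\stra_1(s)(a)\ge0$ and summing over $a$ gives $\trans_{1,\stra_1}(s,\bot,b)(t)\le c\cdot\trans_{2,\stra_1}(s,\bot,b)(t)$, and symmetrically with the roles of $1$ and $2$ exchanged. Hence every ratio occurring in $\dist(G_1\restr\stra_1,G_2\restr\stra_1)$ is at most $c$, so $\dist(G_1\restr\stra_1,G_2\restr\stra_1)\le c-1=\dist(G_1,G_2)$. The MDP case is identical: fixing $\stra_1$ turns an MDP into a Markov chain, and the same argument applies verbatim to the single transition function of a Markov chain, simply dropping the quantification over $b$.

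I do not expect any real obstacle here; the only points requiring a little care are that the mixing weights are the \emph{same} in both games and that the support-preservation claim must be verified to make $\dist(G_1\restr\stra_1,G_2\restr\stra_1)$ meaningful — both are immediate from the hypotheses and Proposition~\ref{prop_dist} is not even needed.
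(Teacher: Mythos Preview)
Your proposal is correct and is exactly the straightforward verification from the definitions that the paper alludes to (the paper gives no detailed proof, merely remarking that the proposition ``is straightforward to verify from the definitions''). The convex-combination argument you give for both $\dista$ and $\dist$, together with the support-preservation check, is the natural way to carry this out.
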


\section{Markov Chains with Multi-discounted and Parity Objectives}\label{sec:mc}
In this section we consider Markov chains with multi-discounted
and parity objectives.
We present a bound on the difference of value functions of two
structurally equivalent Markov chains that is dependent on the 
distance between the Markov chains and is \emph{independent} of
the discount factors.
The result for parity objectives is then a consequence of our result
for multi-discounted objectives and Theorem~\ref{thrm_lit2}.
Our result crucially depends on a result of Friedlin and Wentzell
for Markov chains and we present this result below, and
then use it to obtain the main result of the section.

\smallskip\noindent{\bf Result of Friedlin and Wentzell.}
Let $(S,\trans)$ be a Markov chain and let $s_0$ be the initial state.
Let $C \subset S$ be a proper subset of $S$ and let us denote by 
$\exit_C=\inf\set{n\in \Nats\mid \theta_n \not\in C}$ the first
hitting time to the set $S\setminus C$ of states (or the first exit time
from set $C$) (recall that $\theta_n$ 
is the random variable to denote the $n$-th state of a path).
Let $\calf(C,S)=\set{f: C \mapsto S}$ denote the set of all functions from 
$C$ to $S$.
For every $f \in \calf(C,S)$ we define a directed graph $G_f=(S,E_f)$ 
where $(s,t)\in E_f$ iff $f(s)=t$.
Let $\alpha_f=1$ if the directed graph $G_f$ has no directed cycles
(i.e., $G_f$ is a directed acyclic graph); and $\alpha_f=0$ otherwise.
Observe that since $f$ is a function, for every $s\in C$ there is exactly
one path that starts at $s$.
For every $s \in C$ and every $t \in S$, let  
$\beta_f(s,t)=1$ if the directed path that leaves $s$ in $G_f$ reaches $t$,
otherwise $\beta_f(s,t)=0$.
We now state a result that can be obtained as a special case of the
result from Friedlin and Wentzell~\cite{FW84}.
Below we use the formulation of the result as presented in~\cite{Sol03}
(Lemma~2 of~\cite{Sol03}).

\begin{theorem}[Friedlin-Wentzell result~\cite{FW84}]\label{thrm-fw}
Let $(S,\trans)$ be a Markov chain, and let $C\subset S$ be a proper subset 
of $S$ such that $\Prb_{s}(\exit_C < \infty) >0$ for every $s \in C$ (i.e., 
from all $s \in C$ with positive probability the first hitting time to
the complement set is finite).
Then for every initial state $s_1\in C$ and for every $t \not\in C$ we have 
\begin{eqnarray}\label{eq1}
\Prb_{s_1}(\theta_{\exit_C}=t)= 
\frac{\sum_{f\in \calf(C,S)} (\beta_f(s_1,t) \cdot \prod_{s\in C} \trans(s)(f(s)) )} 
{\sum_{f\in \calf(C,S)} (\alpha_f \cdot \prod_{s\in C} \trans(s)(f(s)))},
\end{eqnarray}
in other words, the probability that the exit state is $t$ when the starting state
is $s_1$ is given by the expression on the right hand side (very informally the right
hand side is the normalized polynomial expression for exit probabilities).
\end{theorem}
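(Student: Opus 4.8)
The plan is to recognize~\eqref{eq1} as a matrix--tree identity; as noted it is precisely Lemma~2 of~\cite{Sol03}, itself a special case of~\cite{FW84}, so in the body of the paper we simply invoke it, and what follows is the route I would take to prove it. The first step is to observe that the hypothesis actually yields the stronger statement $\Prb_s(\exit_C<\infty)=1$ for every $s\in C$: since $C$ is finite and $\Prb_s(\exit_C<\infty)>0$, from each $s\in C$ there is a path of length at most $|C|$ that leaves $C$, so $\Prb_s(\exit_C\le|C|)\ge c$ for some $c>0$ independent of $s$, whence $\Prb_s(\exit_C>n|C|)\le(1-c)^n\to0$. Consequently, writing $h(s):=\Prb_s(\theta_{\exit_C}=t)$ for $s\in C$, one-step analysis gives $h(s)=\trans(s)(t)+\sum_{u\in C}\trans(s)(u)\,h(u)$, i.e.\ $(I-Q)\,h=b$ with $Q=(\trans(s)(u))_{s,u\in C}$ the substochastic restriction of $\trans$ to $C$ and $b=(\trans(s)(t))_{s\in C}$; and $I-Q$ is invertible because $Q^n\to0$ (so $(I-Q)^{-1}=\sum_{n\ge0}Q^n$). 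Hence $h$ is the unique solution, and $h(s_1)=\sum_{s\in C}\big((I-Q)^{-1}\big)_{s_1,s}\,\trans(s)(t)$.

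The second step is to evaluate $(I-Q)^{-1}$ combinatorially. Collapse all of $S\setminus C$ into a single vertex $\rho$ and consider the weighted digraph on $C\cup\{\rho\}$ with edge weights $\trans(s)(u)$ for $u\in C$ and $\trans(s)(\rho):=\trans(s)(S\setminus C)$; then $I-Q$ is exactly the Laplacian of this digraph with the $\rho$-row and $\rho$-column deleted, since $(I-Q)_{s,s}=1-\trans(s)(s)=\trans(s)(\rho)+\sum_{u\in C\setminus\{s\}}\trans(s)(u)$ is the total out-weight of $s$ and $(I-Q)_{s,u}=-\trans(s)(u)$ for $u\ne s$. By the weighted directed matrix--tree theorem, $\det(I-Q)$ is the generating function of spanning arborescences of $C\cup\{\rho\}$ oriented toward $\rho$; un-collapsing $\rho$, such an arborescence is exactly a choice $f\in\calf(C,S)$ with $G_f$ acyclic (all $f$-paths leave $C$), i.e.\ $\alpha_f=1$, contributing weight $\prod_{s\in C}\trans(s)(f(s))$ --- this is the denominator of~\eqref{eq1}. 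The all-minors version of the same theorem gives $\big((I-Q)^{-1}\big)_{s_1,s}=F_{s_1\to s}\big/\det(I-Q)$, where $F_{s_1\to s}$ is the total weight of spanning $2$-forests of $C\cup\{\rho\}$ whose tree containing $s_1$ is rooted at $s$ while the other tree is rooted at $\rho$.

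Combining, $h(s_1)\,\det(I-Q)=\sum_{s\in C}\trans(s)(t)\,F_{s_1\to s}$: in each summand the edge $s\to t$ of weight $\trans(s)(t)$ is attached to the free root $s$ of the $s_1$-component, re-rooting that component at $t$. The resulting configurations are precisely the $f\in\calf(C,S)$ with $\alpha_f=1$ whose directed path out of $s_1$ terminates at $t$ --- equivalently $\alpha_f=\beta_f(s_1,t)=1$, since for acyclic $G_f$ the path out of $s_1$ is simple and exits $C$, so ``reaching $t$'' is the same as being rooted at $t$ --- each weighted by $\prod_{s\in C}\trans(s)(f(s))$. Thus $h(s_1)\,\det(I-Q)$ equals the numerator of~\eqref{eq1}, which is~\eqref{eq1}.

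The whole substance lies in the two matrix--tree identifications, and the main obstacle is the second one: keeping track of the cofactor signs and of the re-rooting so that no spurious sign survives and no cyclic configuration contributes. A route that avoids signed determinant expansions altogether is induction on $|C|$: fix $s_m\in C\setminus\{s_1\}$ and pass to the Markov chain watched only at the states $S\setminus\{s_m\}$, whose transition function on $C\setminus\{s_m\}$ is $\trans'(s)(u)=\trans(s)(u)+\trans(s)(s_m)\,\trans(s_m)(u)\big/(1-\trans(s_m)(s_m))$; watching at a state of $C$ does not change $\Prb_{s_1}(\theta_{\exit_C}=t)$, so the induction hypothesis applies on $C\setminus\{s_m\}$, and expanding $\trans'$ and resumming the geometric series $1\big/(1-\trans(s_m)(s_m))=\sum_{k\ge0}\trans(s_m)(s_m)^k$ recovers the forest sums over $\calf(C,S)$ --- the delicate bookkeeping there being the treatment of the self-loops at the eliminated state $s_m$.
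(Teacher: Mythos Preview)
The paper does not prove Theorem~\ref{thrm-fw}; it is stated as a special case of the Friedlin--Wentzell result and cited from~\cite{FW84} in the formulation of~\cite{Sol03}, Lemma~2, with no argument supplied beyond the citation. You correctly note this at the outset of your proposal. Your additional sketch --- reducing to the linear system $(I-Q)h=b$, identifying $I-Q$ with the $\rho$-principal minor of the Laplacian of the collapsed digraph on $C\cup\{\rho\}$, and invoking the (all-minors) directed matrix--tree theorem to read off both $\det(I-Q)$ and its cofactors as forest sums --- is the standard route to such exit-probability identities and is sound in outline; the alternative you mention, inducting on $|C|$ by watching the chain on $S\setminus\{s_m\}$, is also a legitimate path. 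Since the paper offers no proof, there is nothing to compare against: your sketch simply goes beyond what the paper does.

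One small point worth flagging: your argument in fact establishes the numerator as the sum over $f$ with $\alpha_f=\beta_f(s_1,t)=1$, i.e.\ globally acyclic $f$ whose $s_1$-path exits at $t$. As the paper literally writes~\eqref{eq1}, the numerator carries only $\beta_f(s_1,t)$, which would also admit $f$ with a cycle disjoint from the $s_1$-path; the intended reading (and the one in~\cite{Sol03,FW84}) is the acyclic one you prove, so your version is the right target.
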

\begin{comment}
We present an argument that the assumption that for all $s \in C$ we have 
$\Prb_{s}(\exit_C < \infty) >0$ implies that the denominator of Equation (\ref{eq1})
is positive (also see~\cite{Milman,FW84,Sol03}).
Since all terms in the summation of the denominator is non-negative, we show 
a witness function $f \in \calf(C,S)$ such that $\alpha_f=1$ and 
$\prod_{s\in C} \trans(s)(f(s)) >0$.
Let $s \in C$, and since $\Prb_{s}(\exit_C < \infty) >0$, it follows that there 
exists $\ell >1$ and a sequence of states $s_1 s_2 \ldots s_\ell$ with $s_1=s$ such 
that $s_2,\ldots,s_{\ell-1} \in C$, $s_\ell \in (S\setminus C)$ and for all 
$i=1,2,\ldots, \ell-1$ we have $\trans(s_i)(s_{i+1})>0$.
Let us denote by $\ell_s$ the length of the shortest such sequence. 
We have the following two cases: 
(1) $\ell_s=2$, i.e., there exists $t \in (S\setminus C)$ and $\trans(s)(t)>0$;
or (2) $\ell_s>2$, and then there exists $t \in C$ with $\trans(s)(t)>0$ and 
$\ell_s=\ell_t+1$.
We define the witness $f$ as follows: (1) if $\ell_s=2$, then $f(s)=t$, where $t$ is any
state in $S\setminus C$ with $\trans(s)(t)>0$; 
(2) if $\ell_s>2$, then $f(s)=t$, where $t \in C$ is a state in $C$ such that 
$\trans(s)(t)>0$ and $\ell_s=\ell_t+1$.
Since $s \in S$ is chosen arbitrarily, $f$ is a function from $C$ to $S$, and 
by construction we have $\prod_{s\in C} \trans(s)(f(s)) >0$.
Since for every $s \in C$, if $f(s) \in C$, then $\ell_{f(s)}+1=\ell_s$, it follows
that the directed graph induced by $f$ has no cycles and hence $\alpha_f=1$.
\end{comment}

\noindent{\bf Value function difference for Markov chains.}
We will use the result of Theorem~\ref{thrm-fw} to obtain 
bounds on the value functions of Markov chains. We start with the
notion of mean-discounted time.

\smallskip\noindent{\bf Mean-discounted time.}
Given a Markov chain $(S,\trans)$ and a discount vector $\vdisc$, we define for every 
state $s \in S$, the \emph{mean-discounted time} the process is in the state $s$.
We first define the mean-discounted time function $\MDT(\vdisc,s): \Paths \mapsto \reals$ that
maps every path to the mean-discounted time that the state $s$ is visited, and 
the function is formally defined as follows: for a path $\pat=s_0 s_1 s_2 \ldots$ we have  
\[
\MDT(\vdisc,s)(\pat)= \frac{\sum_{j=0}^\infty (\prod_{i=0}^j \disc(s_i) ) \cdot \indi_{s_j=s} }
{\sum_{j=0}^\infty (\prod_{i=0}^j \disc(s_i) ) };
\]
where $\indi_{s_j=s}$ is the indicator function.
The expected mean-discounted time function for a Markov chain $G$ with transition function 
$\trans$ is defined as follows:
$\MT(s_1,s,G,\vdisc)=\Exp_{s_1}[\MDT(\vdisc,s)]$, i.e., it is the expected mean-discounted time for 
$s$ when the starting state is $s_1$, where the expectation measure is defined by the
Markov chain with transition function $\trans$.
We now present a lemma that shows the value function for multi-discounted Markov chains 
can be expressed as ratio of two polynomials (the result is obtained as a simple 
extension of a result of Solan~\cite{Sol03}).

\begin{lemma}\label{lemm:ratio}
For Markov chains defined on state space $S$, for all initial states $s_0$, 
for all states $s$, for all discount vectors $\vdisc$, there exists two polynomials 
$g_1(\cdot)$ and $g_2(\cdot)$ in $|S|^2$ variables $x_{t,t'}$, where $t,t'\in S$ such that
the following conditions hold:
\begin{enumerate}
\item the polynomials have degree at most $|S|$ with non-negative coefficients; and
\item for all transition functions $\trans$ over $S$ we have 
$\MT(s_0,s,G,\vdisc)= \frac{g_1(\trans)}{g_2(\trans)}$, where $G=(S,\trans)$, 
$g_1(\trans)$ and $g_2(\trans)$ denote the values of the function $g_1$ and $g_2$ such that 
all the variables $x_{t,t'}$ is instantiated with values $\trans(t)(t')$ as given by the transition 
function $\trans$.
\end{enumerate} 
\end{lemma}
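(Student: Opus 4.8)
The plan is to express $\MT(s_0,s,G,\vdisc)$ through exit probabilities and expected occupation counts of a finite auxiliary Markov chain built from $G$ and $\vdisc$ whose transition probabilities are affine functions of the variables $x_{t,t'}$, and then to read off the ratio‑of‑polynomials form from the Friedlin--Wentzell formula (Theorem~\ref{thrm-fw}); this is the discount‑independent analogue of Solan's argument for a single discount factor. First I would rewrite the path functional as $\MDT(\vdisc,s)(\pat)=X(\pat)/Y(\pat)$ with $X(\pat)=\sum_{j\ge0}w_j(\pat)\indi_{s_j=s}$, $Y(\pat)=\sum_{j\ge0}w_j(\pat)$ and $w_j(\pat)=\prod_{i=0}^{j}\disc(s_i)$, note the one‑step identity $w_{j+1}=w_j\cdot\disc(s_{j+1})$, and build the auxiliary chain $\widehat G$ on state space $S\cup\{\diamond\}$: from $t\in S$ go to $t'\in S$ with probability $\disc(t)\,x_{t,t'}$ and to the fresh state $\diamond$ with the remaining probability $1-\disc(t)$ (with $\diamond$ either absorbing or restarting at $s_0$). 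The discount factors $\disc(t)$ now play the role of ``survival'' probabilities, so the $\disc$‑weighted sums $X$ and $Y$ become expected occupation counts of $s$ before absorption and expected ``lengths'' in $\widehat G$; and crucially every transition probability of $\widehat G$ is either a constant in $[0,1)$ or $\disc(t)$ times a single variable $x_{t,t'}$, i.e. affine in the $x_{t,t'}$ with nonnegative coefficients.

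Next I would apply Theorem~\ref{thrm-fw} to $\widehat G$ (together with the companion Friedlin--Wentzell forest formula for expected occupation counts, which has the same degree profile). Since $\widehat G$ has $|S|+1$ states, every proper subset $C$ entering those formulas has $|C|\le|S|$, so each quantity produced is a ratio of two polynomials of degree at most $|S|$, with coefficients in $\{0,1\}$, in the transition probabilities of $\widehat G$. Substituting $\disc(t)x_{t,t'}$ and $1-\disc(t)$ back for those probabilities turns every degree‑$\le|S|$ monomial in the $\widehat G$‑probabilities into a nonnegative scalar times a degree‑$\le|S|$ monomial in the $x_{t,t'}$, so the degree bound and the nonnegativity of the coefficients are both preserved; clearing denominators over the finitely many such ratios that get combined yields a single pair $g_1,g_2$ with the asserted properties (a constant polynomial being allowed for $g_2$ when no normalization survives).

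The step I expect to be the genuine obstacle is controlling the normalization $Y(\pat)=\sum_j\prod_{i=0}^{j}\disc(s_i)$. In Solan's single‑discount setting this collapses to the constant $1/(1-\disc)$, so $\MT$ is literally a discounted occupation measure and one application of the Friedlin--Wentzell formula finishes the argument; with state‑dependent discounts $Y$ is genuinely path‑dependent, so $\MT=\Exp_{s_0}[X/Y]$ is an expectation of a ratio, not a ratio of expectations, and the per‑path normalization must be carried through the Friedlin--Wentzell machinery. The care is therefore needed (i) in choosing $\widehat G$ (and possibly one further derived chain, obtained by conditioning on or collapsing the $\diamond$‑transitions) so that $\MT$ reappears as a bona fide exit probability / stationary probability of a single finite chain rather than as an uncontrolled algebraic combination of several, and (ii) in the bookkeeping of the previous paragraph, to ensure that fusing the numerator with the normalization neither pushes the total degree above $|S|$ nor introduces cancellations that spoil nonnegativity. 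Everything else --- convergence of the infinite sums, which holds since $0<\disc(t)<1$, and the algebra of substituting and clearing denominators --- is routine.
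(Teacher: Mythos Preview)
Your overall plan---embed $G$ into an auxiliary killed chain with survival probability $\disc(t)$ at state $t$ and then read off a Friedlin--Wentzell ratio---is exactly the paper's approach, and your diagnosis of where the difficulty lies is accurate. What your proposal leaves open is the concrete auxiliary construction that removes that obstacle. Instead of a single cemetery $\diamond$, the paper adds $|S|$ absorbing copy states $S_1=\{t_1:t\in S\}$, sending each $t\in S$ to its own copy $t_1$ with probability $1-\disc(t)$ and to $t'\in S$ with probability $\disc(t)\,\trans(t)(t')$. The labelled cemetery records \emph{which} original state the chain occupied at the moment of death, and the paper argues that $\MT(s_0,s,G,\vdisc)$ coincides with the single exit probability $\Prb_{s_0}^{\ov{\trans}}(\theta_{\exit_S}=s_1)$ by checking that both are the unique solution of the linear system
\[
y_t=(1-\disc(t))\,\indi_{t=s}+\sum_{z\in S}\disc(t)\,\trans(t)(z)\,y_z,\qquad t\in S,
\]
uniqueness coming from the contraction $\max_t\disc(t)<1$. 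Once $\MT$ is itself an exit probability, one application of Theorem~\ref{thrm-fw} with $C=S$ gives it directly as a ratio of two polynomials of degree at most $|C|=|S|$ with nonnegative coefficients in the entries $\ov{\trans}(t)(\cdot)$; substituting back $\ov{\trans}(t)(t')=\disc(t)\,x_{t,t'}$ for $t'\in S$ and $\ov{\trans}(t)(t_1)=1-\disc(t)$ preserves both the degree bound and the nonnegativity, exactly as you observed.

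This sidesteps everything you flagged as delicate: there is no companion forest formula for occupation counts to invoke, no $\Exp[X/Y]$ versus $\Exp[X]/\Exp[Y]$ mismatch to reconcile, and no fusing or clearing of several ratios that could push the degree above $|S|$ or spoil nonnegativity. Your one-$\diamond$ chain throws away precisely the exit-location information that the paper's construction retains, which is why the obstacle looked hard from where you stood; the fix is structural (enlarge the cemetery to $|S|$ labelled states and verify the linear system), not analytic.
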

\begin{proof} {\em (Sketch).} We present a sketch of the proof (details in appendix).
Fix a discount vector $\vdisc$. We construct a Markov chain $\ov{G}=(\ov{S},\ov{\trans})$ as 
follows: $\ov{S}=S \cup S_1$, where $S_1$ is a copy of states of $S$ (and for a state $s \in S$ 
we denote its corresponding copy as $s_1$); and the transition function $\ov{\trans}$ 
is defined below
\begin{enumerate}
\item $\ov{\trans}(s_1)(s_1)=1$ for all $s_1 \in S_1$ (i.e., all copy states are absorbing);
\item for $s \in S$ we have 
\[
\ov{\trans}(s)(t) =
\begin{cases}
(1-\disc(s)) & t=s_1; \\
\disc(s) \cdot \trans(s)(t) & t\in S; \\
0 & t\in S_1\setminus{s_1};
\end{cases}
\]
i.e., it goes to the copy with probability $(1-\disc(s))$, it follows
the transition $\trans$ in the original copy with probabilities multiplied 
by $\disc(s)$.
\end{enumerate}
We first show that for all $s_0$ and $s$ we have
$\MT(s_0,s,G,\vdisc)=
\Prb_{s_0}^{\ov{\trans}}(\theta_{\exit_S}=s_1)$;
i.e., the expected mean-discounted time in $s$ when the original Markov chain 
starts in $s_0$ is the probability in the Markov chain $(\ov{S},\ov{\trans})$ 
that the first hitting state out of $S$ is the copy $s_1$ of the state $s$.
The claim is easy to verify as both $(\MT(s_0,s,G,\vdisc))_{s_0\in S}$ and 
$(\Prb_{s_0}^{\ov{\trans}}(\theta_{\exit_S}=s_1))_{s_0\in S}$ are the unique solution 
of the following system of linear equations: for all $t \in S$ we have
$
y_{t}= (1-\disc(t)) \cdot \indi_{t=s} + \sum_{z\in S} \disc(t)\cdot\trans(t)(z) \cdot y_z.
$%%\]
%%Also the above system of linear equations has a unique solution (this is due to contraction 
%%mapping) and we prove this below:
%%let $(y^1_z)_{z \in S}$ and $(y^2_z)_{z \in S}$ be two solutions of the system.
%We chose $z^*\in S$ such that $z^*=\arg\max_{z\in S} |y^1_z -y^2_z|$, i.e., $z^*$ is 
%a state that maximizes the difference of the two solutions.
%Let $\eta=|y^1_{z^*} - y^2_{z^*}|$. 
%As $y^1$ and $y^2$ are solutions of the above system we have by the triangle 
%inequality
%\[
%\begin{array}{rcl}
%0 \leq \eta =|y^1_{z^*} - y^2_{z^*}| & \leq & 
%\displaystyle \sum_{t \in S} \disc(t) \cdot |y^1_t-y^2_t| \\[1ex] 
%& \leq & 
%\displaystyle \eta \cdot \sum_{t\in S} \disc(t) \cdot \trans(s_0)(t) 
%\leq \eta \cdot \max_{t\in S} \disc(t) \cdot \sum_{t \in S}\trans(s_0)(t).
%\end{array}
%\] 
%Since $\sum_{t \in S}\trans(s_0)(t)=1$, it follows that 
%$\eta \leq \eta \cdot\max_{t\in S} \disc(t)$.
%Since $\max_{t\in S}\disc(t) < 1$ it follows that we must have $\eta=0$ and hence the 
%two solutions must coincide.

We now claim that 
$\Prb_{s_0}^{\ov{\trans}}(\exit_{S} < \infty) >0$ for all $s_0 \in S$.
This follows since for all $s \in S$ we have $\ov{\trans}(s)(s_1) =(1-\disc(s))>0$ 
and since $s_1 \not\in S$ we have 
$\Prb_{s_0}^{\ov{\trans}}(\exit_{S} =2)= (1-\disc(s_0)) >0$.
Now we observe that we can apply Theorem~\ref{thrm-fw} on the Markov chain 
$\ov{G}=(\ov{S},\ov{\trans})$  with $S$ as the set $C$ of states of Theorem~\ref{thrm-fw}, 
and obtain the result.
Indeed the terms $\alpha_f$ and $\beta_f(s,t)$ are independent of 
$\trans$, and the two products of Equation (\ref{eq1}) each contains at most $|S|$
terms of the form $\ov{\trans}(s)(t)$ for $s,t \in \ov{S}$.
Thus the desired result follows.
\qed
\end{proof}

\begin{lemma}\label{lemm:poly}
Let $h(x_1,x_2,\ldots,x_k)$ be a polynomial function with non-negative
coefficients of degree at most $n$.
Let $\vare>0$ and $y,y' \in \reals^k$ be two non-negative vectors such that
for all $i=1,2,\ldots,k$ we have $\frac{1}{1+\vare} \leq \frac{y_i}{y'_i} 
\leq 1+\vare$.
Then we have 
%\[
$(1+\vare)^{-n} \leq \frac{h(y)}{h(y')} \leq (1+\vare)^n$.
%\]
\end{lemma}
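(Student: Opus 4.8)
The plan is to expand $h$ into its monomials and argue termwise, exploiting that the coefficients are non-negative so that per-monomial estimates can be summed without any cancellation. Write $h(x_1,\dots,x_k)=\sum_{\alpha} c_\alpha\, x^\alpha$, where $\alpha=(\alpha_1,\dots,\alpha_k)$ ranges over a finite set of multi-indices with $|\alpha|:=\sum_{i=1}^k\alpha_i\le n$, each $c_\alpha\ge 0$, and $x^\alpha:=\prod_{i=1}^k x_i^{\alpha_i}$. The first step is to restate the hypothesis in its convenient coordinatewise form: for every $i$ we have $y_i\le (1+\vare)\,y'_i$ and $y'_i\le (1+\vare)\,y_i$, so in particular $y_i>0$ exactly when $y'_i>0$, and any monomial that vanishes at $y'$ also vanishes at $y$ and may be discarded.

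Next I would estimate a single monomial. Fix $\alpha$ with $|\alpha|=m\le n$. Multiplying the coordinatewise bounds raised to the powers $\alpha_i$ gives $y^\alpha=\prod_i y_i^{\alpha_i}\le \prod_i\big((1+\vare)\,y'_i\big)^{\alpha_i}=(1+\vare)^m\,(y')^\alpha\le (1+\vare)^n\,(y')^\alpha$, the last inequality using $1+\vare\ge 1$ together with $(y')^\alpha\ge 0$; symmetrically $(y')^\alpha\le (1+\vare)^n\,y^\alpha$. Summing these against the non-negative coefficients $c_\alpha$ yields $h(y)=\sum_\alpha c_\alpha\,y^\alpha\le (1+\vare)^n\sum_\alpha c_\alpha\,(y')^\alpha=(1+\vare)^n\,h(y')$ and, in the same way, $h(y')\le (1+\vare)^n\,h(y)$. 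The ratio $h(y)/h(y')$ is meaningful precisely when $h(y')>0$, in which case the second inequality forces $h(y)>0$ as well, and dividing gives $(1+\vare)^{-n}\le h(y)/h(y')\le (1+\vare)^n$, which is the claim.

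I do not expect a genuine obstacle here; the argument is elementary. The only two points that need a moment's care are exactly the two hypotheses of the lemma: using $1+\vare\ge 1$ to replace the true degree $m$ of each monomial by the uniform exponent $n$ (this is what makes "degree at most $n$" rather than "degree exactly $n$" the right statement), and the non-negativity of the coefficients, which is what lets the termwise bounds be aggregated into a bound on the whole polynomial — without it, sign cancellations would invalidate the conclusion. A minor degenerate case (some $y'_i=0$, hence $y_i=0$) is handled as noted above by simply dropping the monomials that vanish.
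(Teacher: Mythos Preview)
Your proof is correct and follows essentially the same approach as the paper's: expand $h$ into monomials, bound each monomial by $(1+\vare)^n$ using the coordinatewise ratio hypothesis, and sum using non-negativity of the coefficients. Your version is slightly more careful in spelling out the replacement of the per-monomial degree $m$ by the uniform $n$ and in noting the degenerate cases, but the argument is the same.
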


\begin{lemma}\label{lemm_mc_multi}
Let $G_1=(S,\trans)$ and $G_2=(S,\trans')$ be two structurally  equivalent Markov
chains. %%such that their distance is $\dist(\trans,\trans')$. %% \leq \frac{1}{2\cdot|S|}$.
For all non-negative reward functions $r:S \mapsto \reals$ such that the reward 
function is bounded by~1, for all discount vectors $\vdisc$, 
for all $s \in S$ we have 
%\[
$|\Val(G_1,\MDT(\vdisc,r))(s) - \Val(G_2,\MDT(\vdisc,r))(s)| \leq 
(1+ \dist(G_1,G_2))^{2\cdot|S|}-1$;
%\]
i.e., the absolute difference of the value functions for the multi-discounted objective is 
bounded by $(1+ \dist(G_1,G_2))^{2\cdot|S|}-1$.
\end{lemma}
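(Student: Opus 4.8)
The plan is to reduce the value‑function difference to a sum, over states $s$, of differences of \emph{expected mean‑discounted times}, and then to control each such difference using the rational‑function representation of Lemma~\ref{lemm:ratio} together with the polynomial‑ratio estimate of Lemma~\ref{lemm:poly}. Since a Markov chain has no choices, $\Val(G_i,\MDT(\vdisc,r))(s_0)=\Exp_{s_0}[\MDT(\vdisc,r)]$ for $i=1,2$. First I would note the pointwise identity $\MDT(\vdisc,r)(\pat)=\sum_{s\in S}r(s)\cdot\MDT(\vdisc,s)(\pat)$ along every path $\pat=s_0s_1s_2\ldots$ (group the numerator terms $r(s_j)$ according to the value of $s_j$; numerators and denominators match term by term). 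Taking expectations gives $\Val(G_i,\MDT(\vdisc,r))(s_0)=\sum_{s\in S}r(s)\cdot\MT(s_0,s,G_i,\vdisc)$, and specializing $r\equiv 1$ gives $\MDT(\vdisc,\indi)(\pat)=1$ on every path, hence the normalization $\sum_{s\in S}\MT(s_0,s,G_i,\vdisc)=1$.

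Next, fix $s_0$ and $s$ and write $\rho=\dist(G_1,G_2)$. By Lemma~\ref{lemm:ratio} there are fixed polynomials $g_1,g_2$ of degree at most $|S|$ with non‑negative coefficients in the variables $x_{t,t'}$ with $\MT(s_0,s,G_i,\vdisc)=g_1(\trans_i)/g_2(\trans_i)$, where $\trans_1=\trans$ and $\trans_2=\trans'$. I would first dispose of the degenerate case: if $\MT(s_0,s,G_2,\vdisc)=0$ then, because $G_1\equiv G_2$ have the same set of positive‑probability paths and $\MDT(\vdisc,s)(\pat)>0$ iff $s$ occurs on $\pat$, also $\MT(s_0,s,G_1,\vdisc)=0$, so such an $s$ contributes $0$ to both value functions and the per‑state bound below holds trivially. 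For the remaining $s$ we have $g_1(\trans_i),g_2(\trans_i)>0$, and the goal is $(1+\rho)^{-2|S|}\le \MT(s_0,s,G_1,\vdisc)/\MT(s_0,s,G_2,\vdisc)\le (1+\rho)^{2|S|}$.

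The one step that requires care — and the only real obstacle — is that $g_1,g_2$ may contain monomials mentioning a variable $x_{t,t'}$ with $\trans(t)(t')=0$, for which the hypothesis $\trans(t)(t')/\trans'(t)(t')\in[(1+\rho)^{-1},1+\rho]$ of Lemma~\ref{lemm:poly} is meaningless ($0/0$). Here structural equivalence is exactly what saves us: $\trans(t)(t')=0\iff\trans'(t)(t')=0$, and since all coefficients are non‑negative, any monomial mentioning such a variable evaluates to $0$ both at $\trans$ and at $\trans'$. Hence deleting all such monomials from $g_1,g_2$ yields polynomials $\widehat g_1,\widehat g_2$ (still of degree at most $|S|$ with non‑negative coefficients) satisfying $\widehat g_j(\trans)=g_j(\trans)$ and $\widehat g_j(\trans')=g_j(\trans')$, and involving only variables $x_{t,t'}$ with $\trans(t)(t')>0$, on which the ratio distance gives $(1+\rho)^{-1}\le \trans(t)(t')/\trans'(t)(t')\le 1+\rho$. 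Lemma~\ref{lemm:poly} with $n=|S|$ then gives $(1+\rho)^{-|S|}\le \widehat g_j(\trans)/\widehat g_j(\trans')\le (1+\rho)^{|S|}$ for $j=1,2$; dividing the $j=1$ estimate by the $j=2$ estimate yields $\MT(s_0,s,G_1,\vdisc)/\MT(s_0,s,G_2,\vdisc)\in[(1+\rho)^{-2|S|},(1+\rho)^{2|S|}]$, whence $|\MT(s_0,s,G_1,\vdisc)-\MT(s_0,s,G_2,\vdisc)|\le\big((1+\rho)^{2|S|}-1\big)\cdot\MT(s_0,s,G_2,\vdisc)$.

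Finally I would assemble the pieces, using $0\le r(s)\le 1$, the triangle inequality, the per‑state bound just obtained, and the normalization $\sum_{s\in S}\MT(s_0,s,G_2,\vdisc)=1$:
\[
\big|\Val(G_1,\MDT(\vdisc,r))(s_0)-\Val(G_2,\MDT(\vdisc,r))(s_0)\big|
\le\sum_{s\in S}\big|\MT(s_0,s,G_1,\vdisc)-\MT(s_0,s,G_2,\vdisc)\big|
\le (1+\dist(G_1,G_2))^{2\cdot|S|}-1,
\]
which is exactly the claimed bound. I would emphasize that this bound depends only on $\dist(G_1,G_2)$ and $|S|$ and is completely independent of the discount vector $\vdisc$ — precisely the feature that will allow it to survive the iterated limit of the discount factors when combined with Theorem~\ref{thrm_lit2} to give the analogous statement for parity objectives.
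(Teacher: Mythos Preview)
Your proof is correct and follows essentially the same strategy as the paper: write the value as a reward-weighted sum of expected mean-discounted times, invoke the rational-function representation of Lemma~\ref{lemm:ratio}, and then the ratio estimate of Lemma~\ref{lemm:poly}. The paper differs only in the final aggregation step. It observes (implicitly, via the form of the Friedlin--Wentzell formula) that the denominator $g_2$ is the same for every target state, so $\Val(G,\MDT(\vdisc,r))(s_0)=\sum_{t}r(t)\,\MT(s_0,t,G,\vdisc)$ is itself a single ratio of degree-$|S|$ polynomials with non-negative coefficients; it then bounds $\Val(G_1)/\Val(G_2)$ in one shot and finishes using $\Val(G_2)\le 1$. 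Your route instead bounds each $\MT(s_0,s,G_1)/\MT(s_0,s,G_2)$ separately and closes with the normalization $\sum_s\MT(s_0,s,G_2,\vdisc)=1$, which neatly avoids having to note the shared denominator. Your explicit pruning of monomials containing variables $x_{t,t'}$ with $\trans(t)(t')=\trans'(t)(t')=0$ is also more careful than the paper, which only checks the ratio hypothesis of Lemma~\ref{lemm:poly} on the support and tacitly relies on structural equivalence for the same reason you spell out.
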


\noindent The proof of Lemma~\ref{lemm_mc_multi} uses Lemma~\ref{lemm:ratio} and Lemma~\ref{lemm:poly} 
and is presented in the appendix.

\begin{theorem}\label{thrm_mc_multi}
Let $G_1=(S,\trans)$ and $G_2=(S,\trans')$ be two structurally  equivalent Markov
chains. 
Let $\eta$ be the minimum positive transition probability in $G_1$.
The following assertions hold:
\begin{enumerate}
\item 
For all non-negative reward functions $r:S \mapsto \reals$ such that the reward 
function is bounded by~1, for all discount vectors $\vdisc$, 
for all $s \in S$ we have 
\[
\begin{array}{rcl}
|\Val(G_1,\MDT(\vdisc,r))(s) - \Val(G_2,\MDT(\vdisc,r))(s)| & \leq & 
(1+ \vare_R)^{2\cdot|S|}-1 
\\[2ex] & \leq &\displaystyle
%\leq  
(1+\vare_A)^{2\cdot |S|}-1
\end{array}
\]

\item For all parity objectives $\Phi$ and  
for all $s \in S$ we have 
\[
\begin{array}{rcl}
|\Val(G_1,\Phi)(s) - \Val(G_2,\Phi)(s)| & \leq & 
(1+ \vare_R)^{2\cdot|S|}-1 %\\[2ex] & \leq &\displaystyle
\leq  
(1+\vare_A)^{2\cdot |S|}-1
\end{array}
\]

\end{enumerate}
where $\vare_R=\dist(G_1,G_2)$ and 
$\vare_A=\frac{\dista(G_1,G_2)}{\eta}$.
\end{theorem}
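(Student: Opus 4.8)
The plan is to derive both assertions as essentially immediate consequences of Lemma~\ref{lemm_mc_multi}, Theorem~\ref{thrm_lit2}, and Proposition~\ref{prop_dist}. First I would dispense with the second inequality in each chain: by Proposition~\ref{prop_dist}, since $\eta$ is the minimum positive transition probability in $G_1$ and $G_2 \in \equivclass{G_1}$, we have $\dist(G_1,G_2) \le \dista(G_1,G_2)/\eta$, i.e.\ $\vare_R \le \vare_A$. Because the function $x \mapsto (1+x)^{2|S|}-1$ is monotone increasing in $x \ge 0$, this gives $(1+\vare_R)^{2|S|}-1 \le (1+\vare_A)^{2|S|}-1$ at once. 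So in both parts it suffices to prove the first inequality, the one stated in terms of $\vare_R = \dist(G_1,G_2)$.

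For part~(1), the first inequality is literally the statement of Lemma~\ref{lemm_mc_multi} (with $\dist(G_1,G_2)$ written as $\vare_R$): for every non-negative reward function $r$ bounded by~$1$, every discount vector $\vdisc$, and every state $s$, the value functions for the multi-discounted objective $\MDT(\vdisc,r)$ differ by at most $(1+\dist(G_1,G_2))^{2|S|}-1$. So part~(1) is done by combining Lemma~\ref{lemm_mc_multi} with the monotonicity argument above.

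For part~(2), I would fix a parity objective $\Phi$ given by a priority function $p$, and let $r$ be the reward function assigning reward~$1$ to even-priority states and reward~$0$ to odd-priority states, as in Theorem~\ref{thrm_lit2}; note $r$ is non-negative and bounded by~$1$, so part~(1) applies to it for every discount vector. By Theorem~\ref{thrm_lit2}, applied to $G_1$ and to $G_2$ (the theorem holds for Markov chains as a special case of concurrent game structures, and crucially the order $s_1 \ldots s_n$ on the states depends only on the priority function $p$, hence is the same for both structurally equivalent chains), we have $\Val(G_i,\Phi) = \lim_{\disc(s_1)\to 1}\cdots\lim_{\disc(s_n)\to 1}\Val(G_i,\MDT(\vdisc,r))$ for $i=1,2$. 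Since for every $\vdisc$ the quantity $|\Val(G_1,\MDT(\vdisc,r))(s) - \Val(G_2,\MDT(\vdisc,r))(s)|$ is bounded by the constant $(1+\dist(G_1,G_2))^{2|S|}-1$, which does not depend on $\vdisc$, the bound passes to the iterated limit: $|\Val(G_1,\Phi)(s) - \Val(G_2,\Phi)(s)| \le (1+\dist(G_1,G_2))^{2|S|}-1$. Applying the monotonicity argument once more to insert the $\vare_A$ bound completes the proof.

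The only subtle point — and the step I would be most careful about — is the interchange of the absolute-value bound with the nested limits in part~(2): this is valid precisely because the bound from Lemma~\ref{lemm_mc_multi} is uniform in $\vdisc$ (indeed independent of $\vdisc$ altogether), which is the whole reason that lemma was proved in that form. Everything else is routine: monotonicity of $(1+x)^{2|S|}-1$ and the direct invocation of Proposition~\ref{prop_dist}, Theorem~\ref{thrm_lit2}, and Lemma~\ref{lemm_mc_multi}.
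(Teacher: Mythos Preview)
Your proposal is correct and follows essentially the same approach as the paper: part~(1) is Lemma~\ref{lemm_mc_multi} combined with Proposition~\ref{prop_dist}, and part~(2) follows from part~(1) together with Theorem~\ref{thrm_lit2}, using exactly the observation you highlight---that the bound from Lemma~\ref{lemm_mc_multi} is independent of the discount vector and therefore survives the iterated limit.
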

\begin{proof}
The first part follows from Lemma~\ref{lemm_mc_multi} and 
Proposition~\ref{prop_dist}.
The second part follows from part~1, the fact the value 
function for parity objectives is obtained as the limit 
of multi-discounted objectives (Theorem~\ref{thrm_lit2}), and 
the fact the bound for part~1 is independent of the 
discount factors (hence independent of taking the limit).
\qed
\end{proof}

\smallskip\noindent{\bf Remark on structural assumption in the proof.}
The result of the previous theorem depends on the structural equivalence 
assumption in two crucial ways. They are as follows:
(1)~Proposition~\ref{prop_dist} that establishes the relation of 
$\dist$ and $\dista$ only holds with the assumption of structural 
equivalence; and 
(2)~without the structural equivalence assumption $\dist$ is $\infty$, and 
hence without the assumption the bound of
the previous theorem is $\infty$, which is a trivial bound.
We will later show (in Example~\ref{examp1}) that the structural equivalence
assumption is necessary.

\section{Value Continuity for Parity Objectives}\label{sec:games}
In this section we show two results: first we show robustness of strategies 
and present quantitative bounds on value functions for turn-based stochastic 
games and then we show continuity for concurrent parity games.

\subsection{Bounds for structurally equivalent turn-based stochastic parity games}\label{sec:tb}
%%The Blackwell optimality result for turn-based stochastic discounted games (with the same 
%%discount factor $0<\disc<1$ for all states) shows that there exists a pure memoryless
%%strategy and a discount factor $0<\disc_1<1$ such that the pure memoryless strategy 
%%is optimal for all discount factors $\disc_2$ such that $0< \leq \disc_1 \leq \disc_2 <1$.
%%We first start with similar results for MDPs and then extend it to turn-based stochastic 
%%games.
In this section we present quantitative bounds for robustness of optimal strategies 
in structurally equivalent turn-based stochastic games.
For every $\vare>0$, we present a bound $\beta>0$, such 
that if the distance of the structurally equivalent turn-based stochastic games differs 
by at most $\beta$, then any pure memoryless optimal strategy in one game is $\vare$-optimal
in the other. 
The result is first shown for MDPs and then extended to turn-based stochastic games (both 
proofs are in the appendix).

\begin{theorem}\label{thrm_tb_stochastic}
Let $G_1$ be a turn-based stochastic game such that the minimum positive transition 
probability is $\eta>0$. 
The following assertions hold:
\begin{enumerate}
\item For all turn-based stochastic games $G_2 \in \equivclass{G_1}$, 
for all parity objectives $\Phi$ and for all $s \in S$ we have 
\[
\begin{array}{rcl}
|\Val(G_1,\Phi)(s) - \Val(G_2,\Phi)(s)| 
& \leq & \displaystyle
(1+ \dist(G_1,G_2))^{2\cdot|S|}-1 \\[2ex]
& \leq & \displaystyle
\bigg(1+\frac{\dista(G_1,G_2)}{\eta}\bigg)^{2\cdot |S|}-1
\end{array}
\]

\item For $\vare>0$, let 
$\beta \leq \frac{\eta}{2} \cdot\big((1+\frac{\vare}{2})^{\frac{1}{2\cdot|S|}}-1)$.
%%such that $\beta \leq \frac{\eta}{2}$.
For all $G_2 \in \equivclass{G_1}$ such that $\dista(G_1,G_2) \leq \beta$, for all 
parity objectives $\Phi$, every pure memoryless
optimal strategy $\stra_1$ in $G_1$ is an $\vare$-optimal strategy in $G_2$. 
\end{enumerate}
\end{theorem}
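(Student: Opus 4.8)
\emph{Overall approach.} The plan is to reduce both parts to the Markov-chain bound of Theorem~\ref{thrm_mc_multi} by fixing pure memoryless optimal strategies: I would first prove the statement for MDPs, then lift it to turn-based stochastic games, and finally obtain Part~2 by chaining two instances of the resulting value-difference estimate. By Theorem~\ref{thrm_lit1}, pure memoryless optimal strategies exist for both players in turn-based stochastic parity games, hence also in MDPs. Two simple facts are used throughout: (i)~if $\stra$ is a \emph{pure} memoryless strategy of one player, then $G\restr\stra$ is obtained from $G$ by deleting, at each of that player's states, every move except the one $\stra$ selects; thus $G\restr\stra$ is an MDP (a Markov chain if $G$ is an MDP), $G\restr\stra\equiv G'\restr\stra$ whenever $G\equiv G'$, the minimum positive transition probability of $G\restr\stra$ is at least that of $G$, and $\dista(G\restr\stra,G'\restr\stra)\le\dista(G,G')$ as well as $\dist(G\restr\stra,G'\restr\stra)\le\dist(G,G')$ (Proposition~\ref{prop_stra_fix}); and (ii)~the inequality $\dist\le\dista/\eta$ of Proposition~\ref{prop_dist}.

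\emph{The MDP case.} I would first establish: if $M_1$ is an MDP with minimum positive transition probability $\eta$, $M_2\in\equivclass{M_1}$, and $\Phi$ a parity objective, then $|\Val(M_1,\Phi)(s)-\Val(M_2,\Phi)(s)|\le(1+\dist(M_1,M_2))^{2\cdot|S|}-1\le(1+\dista(M_1,M_2)/\eta)^{2\cdot|S|}-1$. For a player-1 MDP, pick a pure memoryless optimal strategy $\stra^{(1)}$ in $M_1$, so that $\Val(M_1,\Phi)(s)=\Val(M_1\restr\stra^{(1)},\Phi)(s)$ as a value of the induced Markov chain, while $\Val(M_2,\Phi)(s)\ge\Val(M_2\restr\stra^{(1)},\Phi)(s)$ since player~1 may play $\stra^{(1)}$ in $M_2$; applying Theorem~\ref{thrm_mc_multi}(2) to the structurally equivalent Markov chains $M_1\restr\stra^{(1)}$, $M_2\restr\stra^{(1)}$ together with~(i) gives $\Val(M_2,\Phi)(s)\ge\Val(M_1,\Phi)(s)-((1+\dist(M_1,M_2))^{2\cdot|S|}-1)$, and the symmetric argument with a pure memoryless optimal strategy in $M_2$ gives the reverse inequality. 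The player-2 MDP case follows either by the same argument with the roles of the players exchanged, or by noting that a player-2 MDP with objective $\Phi$ is a player-1 MDP controlled by player~2 with the complementary (again parity) objective and that complementing values preserves differences. Proposition~\ref{prop_dist} then converts the $\dist$-bound into the $\dista$-bound.

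\emph{Turn-based games and Part~2.} For Part~1, let $G_1$ be turn-based stochastic, $G_2\in\equivclass{G_1}$, and fix pure memoryless optimal strategies $\stra_1^{*}$ for player~1 and $\stra_2^{*}$ for player~2 in $G_1$. By determinacy $\Val(G_1,\Phi)(s)=\Val(G_1\restr\stra_1^{*},\Phi)(s)=\Val(G_1\restr\stra_2^{*},\Phi)(s)$, reading the last two quantities as values of the induced player-2 and player-1 MDPs. In $G_2$, playing $\stra_1^{*}$ gives $\Val(G_2,\Phi)(s)\ge\Val(G_2\restr\stra_1^{*},\Phi)(s)$ and playing $\stra_2^{*}$ gives $\Val(G_2,\Phi)(s)\le\Val(G_2\restr\stra_2^{*},\Phi)(s)$; since the fixed strategies are pure, the induced MDPs are structurally equivalent and still have minimum positive transition probability at least $\eta$, so feeding these two inequalities into the MDP bound yields $|\Val(G_1,\Phi)(s)-\Val(G_2,\Phi)(s)|\le(1+\dist(G_1,G_2))^{2\cdot|S|}-1\le(1+\dista(G_1,G_2)/\eta)^{2\cdot|S|}-1$. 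For Part~2, let $\stra_1$ be any pure memoryless optimal strategy in $G_1$ and set $\delta=(1+\dista(G_1,G_2)/\eta)^{2\cdot|S|}-1$. Applying the MDP bound to the player-2 MDPs $G_1\restr\stra_1$ and $G_2\restr\stra_1$ gives $\Val^{\stra_1}(G_2,\Phi)(s)=\Val(G_2\restr\stra_1,\Phi)(s)\ge\Val(G_1\restr\stra_1,\Phi)(s)-\delta=\Val(G_1,\Phi)(s)-\delta$, the last equality because $\stra_1$ is optimal in $G_1$; combined with Part~1 this gives $\Val^{\stra_1}(G_2,\Phi)(s)\ge\Val(G_2,\Phi)(s)-2\delta$. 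A short calculation shows that $\dista(G_1,G_2)\le\beta\le\frac{\eta}{2}\bigl((1+\frac{\vare}{2})^{1/(2\cdot|S|)}-1\bigr)$ forces $\delta\le\frac{\vare}{2}$, i.e.\ $2\delta\le\vare$, which is exactly the required $\vare$-optimality of $\stra_1$ in $G_2$.

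\emph{Main obstacle.} The genuinely hard content lies entirely upstream, in Theorem~\ref{thrm_mc_multi}: its Markov-chain bound is \emph{independent of the discount factors} (it comes from the Friedlin--Wentzell representation of Theorem~\ref{thrm-fw}, Lemma~\ref{lemm:ratio}, and the polynomial estimate of Lemma~\ref{lemm:poly}), which is why it survives the limit in the discount factors and applies to parity objectives. Granting that, the work in the present theorem is bookkeeping, and the only delicate points are (a)~checking that fixing a \emph{pure} strategy preserves structural equivalence, the lower bound $\eta$ on the minimum positive transition probability, and the exponent $2\cdot|S|$, so that the Markov-chain and MDP bounds can be invoked with unchanged constants; and (b)~tracking the factor of~$2$ produced by chaining the estimate that bounds the value guaranteed by $\stra_1$ in $G_2$ against the value of $G_1$ with the estimate that bounds the value of $G_1$ against the value of $G_2$ --- this factor is precisely what the halvings in $\frac{\eta}{2}$ and $\frac{\vare}{2}$ inside $\beta$ are there to absorb.
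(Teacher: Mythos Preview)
Your proof is correct and follows essentially the same route as the paper: first establish the bound for MDPs by fixing a pure memoryless optimal strategy and invoking the Markov-chain bound of Theorem~\ref{thrm_mc_multi}, then lift to turn-based stochastic games by fixing a pure memoryless optimal strategy for one player and invoking the MDP bound just proved, and finally derive Part~2 by chaining two instances of the value-difference estimate exactly as you describe. One small remark on your closing commentary: only the $\frac{\vare}{2}$ in $\beta$ is needed to absorb the factor of~$2$ from the chaining; the $\frac{\eta}{2}$ is slack (the argument goes through with $\eta$ in its place), but this does not affect the validity of the proof.
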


\subsection{Value continuity for concurrent parity games}\label{sec:conc}

In this section we show value continuity for structurally equivalent 
concurrent parity games, and show with an example on Markov chains that the 
continuity property breaks without the structural equivalence assumption.
Finally with an example on Markov chains we show the our quantitative bounds
are asymptotically optimal for small distance values.
We start with a lemma for MDPs.

\begin{lemma}\label{lemm_mdp_multi}
Let $G_1$ and $G_2$ be two structurally  equivalent MDPs. 
Let $\eta$ be the minimum positive transition probability in $G_1$.
For all non-negative reward functions $r:S \mapsto \reals$ such that the reward 
function is bounded by~1, for all discount vectors $\vdisc$, 
for all $s \in S$ we have 
\[
\begin{array}{rcl}
|\Val(G_1,\MDT(\vdisc,r))(s) - \Val(G_2,\MDT(\vdisc,r))(s)| & \leq & 
(1+ \dist(G_1,G_2))^{2\cdot|S|}-1 \\[2ex]
& \leq &
\displaystyle 
\bigg(1+\frac{\dista(G_1,G_2)}{\eta}\bigg)^{2\cdot |S|}-1
\end{array}
\]
\end{lemma}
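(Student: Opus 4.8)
The plan is to deduce the MDP statement from the Markov-chain statement of Lemma~\ref{lemm_mc_multi} by fixing optimal memoryless strategies. Assume $G_1$ and $G_2$ are player-1 MDPs (the player-2 case is entirely symmetric). Since an MDP is a turn-based stochastic game, Theorem~\ref{thrm_lit1} provides, for the multi-discounted objective $\MDT(\vdisc,r)$, a pure memoryless optimal strategy for the controlling player, and this is what lets us collapse each MDP to a Markov chain without losing its value.

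First I would prove $\Val(G_1,\MDT(\vdisc,r))(s) \le \Val(G_2,\MDT(\vdisc,r))(s) + (1+\dist(G_1,G_2))^{2\cdot|S|}-1$ for every $s$. Let $\straa^*$ be a pure memoryless optimal strategy in $G_1$. Fixing $\straa^*$ produces Markov chains $G_1\restr\straa^*$ and $G_2\restr\straa^*$ on the same state space $S$; since $G_1\equiv G_2$ and the same memoryless strategy is fixed in both, the supports are preserved, so $G_1\restr\straa^*\equiv G_2\restr\straa^*$, and Proposition~\ref{prop_stra_fix} gives $\dist(G_1\restr\straa^*,G_2\restr\straa^*)\le\dist(G_1,G_2)$. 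Optimality of $\straa^*$ (and the fact that player~2 has no real choices in an MDP) yields $\Val(G_1,\MDT(\vdisc,r))(s)=\Val(G_1\restr\straa^*,\MDT(\vdisc,r))(s)$, while $\straa^*$ being just one player-1 strategy in $G_2$ gives $\Val(G_2\restr\straa^*,\MDT(\vdisc,r))(s)\le\Val(G_2,\MDT(\vdisc,r))(s)$. The hypotheses of Lemma~\ref{lemm_mc_multi} hold for the structurally equivalent chains $G_1\restr\straa^*,G_2\restr\straa^*$ (same reward $r$, non-negative and bounded by~1, same discount vector $\vdisc$), so $|\Val(G_1\restr\straa^*,\MDT(\vdisc,r))(s)-\Val(G_2\restr\straa^*,\MDT(\vdisc,r))(s)|\le(1+\dist(G_1\restr\straa^*,G_2\restr\straa^*))^{2\cdot|S|}-1\le(1+\dist(G_1,G_2))^{2\cdot|S|}-1$, using monotonicity of $x\mapsto(1+x)^{2\cdot|S|}-1$; chaining the displayed (in)equalities gives the desired bound.

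The reverse inequality follows from the symmetric argument, now fixing a pure memoryless optimal strategy of $G_2$ and using that $\dist$ is symmetric in its arguments. Combining the two directions yields $|\Val(G_1,\MDT(\vdisc,r))(s)-\Val(G_2,\MDT(\vdisc,r))(s)|\le(1+\dist(G_1,G_2))^{2\cdot|S|}-1$, the first displayed bound. The second displayed bound is then immediate from Proposition~\ref{prop_dist}, which gives $\dist(G_1,G_2)\le\dista(G_1,G_2)/\eta$ for $\eta$ the minimum positive transition probability of $G_1$, together with monotonicity of $x\mapsto(1+x)^{2\cdot|S|}-1$ on $x\ge 0$.

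I do not expect a genuine obstacle here; the only points needing care are invoking the existence result for \emph{multi-discounted} (not parity) objectives on MDPs, checking that fixing the same memoryless strategy preserves structural equivalence so that Lemma~\ref{lemm_mc_multi} applies verbatim, and the mild asymmetry that forces us to fix an optimal strategy in $G_1$ for one direction and in $G_2$ for the other. The last point is harmless because $\dist$ is symmetric and the conversion to the $\dista/\eta$ estimate via Proposition~\ref{prop_dist} is performed only once, at the very end, so the one-sided appearance of $\eta$ (the minimum positive probability of $G_1$ only) causes no difficulty.
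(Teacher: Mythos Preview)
Your proposal is correct and follows essentially the same route as the paper: fix a pure memoryless optimal strategy (existence from Theorem~\ref{thrm_lit1}), reduce to the Markov-chain case, and invoke Lemma~\ref{lemm_mc_multi} together with Proposition~\ref{prop_stra_fix} to pass from $\dist(G_1\restr\straa^*,G_2\restr\straa^*)$ to $\dist(G_1,G_2)$. The only cosmetic difference is that the paper handles both directions at once via a ``without loss of generality $\Val(G_1,\MDT(\vdisc,r))(s)\ge\Val(G_2,\MDT(\vdisc,r))(s)$'' assumption (so a single optimal strategy in $G_1$ suffices), whereas you treat the two inequalities separately; your observation that applying Proposition~\ref{prop_dist} only at the end sidesteps any asymmetry in~$\eta$ is exactly right.
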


The main idea of the proof of the above lemma is to fix a pure memoryless 
optimal strategy and then use the results for Markov chains. 
Using the same proof idea, along with randomized memoryless optimal
strategies for concurrent game structures and the above lemma, we obtain 
the following lemma (the result is identical to the previous lemma, but
for concurrent game structures instead of MDPs).

\begin{lemma}\label{lemm_conc_multi}
Let $G_1$ and $G_2$ be two structurally  equivalent concurrent game structures. 
Let $\eta$ be the minimum positive transition probability in $G_1$.
For all non-negative reward functions $r:S \mapsto \reals$ such that the reward 
function is bounded by~1, for all discount vectors $\vdisc$, 
for all $s \in S$ we have 
\[
\begin{array}{rcl}
|\Val(G_1,\MDT(\vdisc,r))(s) - \Val(G_2,\MDT(\vdisc,r))(s)| & \leq & 
(1+ \dist(G_1,G_2))^{2\cdot|S|}-1 \\[2ex]
& \leq &
\displaystyle 
\bigg(1+\frac{\dista(G_1,G_2)}{\eta}\bigg)^{2\cdot |S|}-1
\end{array}
\]
\end{lemma}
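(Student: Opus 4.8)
The strategy is to reduce the statement for concurrent game structures to the already-established statement for MDPs (Lemma~\ref{lemm_mdp_multi}). The bridge is the memory complexity result for multi-discounted objectives: by Theorem~\ref{thrm_lit1}, randomized memoryless optimal strategies exist in every concurrent game structure for the objective $\MDT(\vdisc,r)$. So I would first fix, in $G_1$, a randomized memoryless optimal strategy $\stra_1$ for player~1 for the objective $\MDT(\vdisc,r)$. Fixing $\stra_1$ yields the player-2 MDP $G_1\restr\stra_1$, and by definition of the value as $\sup_{\stra_1}\Val^{\stra_1}$ together with optimality of $\stra_1$, we get $\Val(G_1,\MDT(\vdisc,r))(s) = \Val^{\stra_1}(G_1,\MDT(\vdisc,r))(s) = \Val(G_1\restr\stra_1,\MDT(\vdisc,r))(s)$, the last equality because in $G_1\restr\stra_1$ only player~2 moves.

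\textbf{Key steps, in order.} (1) Fix $\stra_1$ optimal and randomized memoryless in $G_1$; let $G_2\in\equivclass{G_1}$. (2) Apply the same strategy $\stra_1$ in $G_2$, obtaining $G_2\restr\stra_1$. Since $\stra_1$ need not be optimal in $G_2$, we only have $\Val(G_2,\MDT(\vdisc,r))(s) \geq \Val(G_2\restr\stra_1,\MDT(\vdisc,r))(s)$, which gives the lower-bound half of the desired two-sided inequality after comparing the two MDPs. (3) Invoke Proposition~\ref{prop_stra_fix}: $G_1\restr\stra_1$ and $G_2\restr\stra_1$ are structurally equivalent MDPs with $\dist(G_1\restr\stra_1,G_2\restr\stra_1)\leq\dist(G_1,G_2)$. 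Note also that fixing a randomized memoryless strategy mixes original transition probabilities with convex weights, so the minimum positive transition probability $\eta'$ of $G_1\restr\stra_1$ satisfies $\eta'\geq \eta$ only after care; more robustly one simply uses the ratio-distance bound and Proposition~\ref{prop_dist} applied to the original $\eta$, or observes $\dista(G_1\restr\stra_1,G_2\restr\stra_1)\leq\dista(G_1,G_2)$ from Proposition~\ref{prop_stra_fix} and that $\dist$ can be bounded directly via $\dist(G_1\restr\stra_1,G_2\restr\stra_1)\le\dist(G_1,G_2)\le\dista(G_1,G_2)/\eta$. (4) Apply Lemma~\ref{lemm_mdp_multi} to the pair $G_1\restr\stra_1, G_2\restr\stra_1$ to bound $|\Val(G_1\restr\stra_1,\MDT(\vdisc,r))(s)-\Val(G_2\restr\stra_1,\MDT(\vdisc,r))(s)|$ by $(1+\dist(G_1,G_2))^{2|S|}-1 \leq (1+\dista(G_1,G_2)/\eta)^{2|S|}-1$. (5) Combine: this controls $\Val(G_1,\MDT(\vdisc,r))(s) - \Val(G_2,\MDT(\vdisc,r))(s)$ from above (it is $\leq \Val(G_1\restr\stra_1) - \Val(G_2\restr\stra_1)$ since $\Val(G_2,\cdot)\geq\Val(G_2\restr\stra_1,\cdot)$), giving one direction; symmetrically, fixing a randomized memoryless optimal strategy $\stra_2$ for player~2 in $G_1$ (which exists by determinacy and the analogue of the memory result for player~2) and using the player-1 MDP $G_1\restr\stra_2$ gives the other direction. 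Taking the maximum of the two one-sided bounds yields the absolute value bound.

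\textbf{Main obstacle.} The delicate point is the asymmetry introduced by fixing a strategy: a strategy optimal in $G_1$ is generally \emph{not} optimal in $G_2$, so each strategy-fixing argument only yields a one-sided comparison, and I must run the argument twice (once with a player-1 optimal strategy, once with a player-2 optimal strategy) and patch the halves together. One must also be careful that the memoryless optimal strategies used are optimal for the \emph{multi-discounted} objective specifically (which is what Theorem~\ref{thrm_lit1} guarantees for concurrent games), not for parity — that is fine here since the lemma is stated for $\MDT(\vdisc,r)$. A minor technical check is that $G\restr\stra_1$ as defined is genuinely a player-2 MDP with state space $S$, so that Lemma~\ref{lemm_mdp_multi} applies verbatim; this is immediate from the definitions preceding Proposition~\ref{prop_stra_fix}. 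No new estimate is needed beyond what Lemma~\ref{lemm_mdp_multi}, Proposition~\ref{prop_stra_fix}, and Proposition~\ref{prop_dist} already supply, so the proof is short once the two-sided bookkeeping is set up correctly.
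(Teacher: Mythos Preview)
Your proposal is correct and follows essentially the same approach as the paper: fix a randomized memoryless optimal strategy (existence via Theorem~\ref{thrm_lit1}), pass to the resulting player-2 MDP, and invoke Lemma~\ref{lemm_mdp_multi} together with Proposition~\ref{prop_stra_fix} to transport the ratio-distance bound. The only cosmetic difference is in handling the two-sided inequality: the paper dispatches it by a ``without loss of generality $\Val(G_1,\cdot)(s)\geq\Val(G_2,\cdot)(s)$'' and the symmetry of $\dist$, whereas you run the argument once with a player-1 optimal strategy and once with a player-2 optimal strategy; both are valid and equally short.
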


We now present the main theorem that depends on Lemma~\ref{lemm_conc_multi}.

\begin{theorem}\label{thrm_conc_multi}
Let $G_1$ and $G_2$ be two structurally  equivalent concurrent game structures. 
Let $\eta$ be the minimum positive transition probability in $G_1$.
For all parity objectives $\Phi$ and
for all $s \in S$ we have 
\[
\begin{array}{rcl}
|\Val(G_1,\Phi)(s) - \Val(G_2,\Phi)(s)| & \leq & 
(1+ \dist(G_1,G_2))^{2\cdot|S|}-1 \\[2ex] & \leq &
\displaystyle 
%%\leq 
\bigg(1+\frac{\dista(G_1,G_2)}{\eta}\bigg)^{2\cdot |S|}-1
\end{array}
\]
\end{theorem}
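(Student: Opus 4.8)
The plan is to combine Lemma~\ref{lemm_conc_multi}, which already gives the desired bound for multi-discounted objectives on structurally equivalent concurrent game structures, with the characterization of parity values as iterated limits of multi-discounted values (Theorem~\ref{thrm_lit2}). The crucial point—already exploited in the proof of Theorem~\ref{thrm_mc_multi}—is that the bound in Lemma~\ref{lemm_conc_multi}, namely $(1+\dist(G_1,G_2))^{2\cdot|S|}-1 \leq (1+\dista(G_1,G_2)/\eta)^{2\cdot|S|}-1$, does not depend on the discount vector $\vdisc$ at all; it depends only on the (common) state space, the ratio distance, and $\eta$. Hence it survives passage to the limit in the discount factors.

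First I would fix a parity objective $\Phi$ given by a priority function $p$, and let $r$ be the reward function assigning $1$ to even-priority states and $0$ to odd-priority states, as in Theorem~\ref{thrm_lit2}. That theorem supplies an order $s_1 s_2 \ldots s_n$ on the states, depending only on $p$ (hence common to $G_1$ and $G_2$, which share the same state space and the same $\Phi$), such that for $i \in \{1,2\}$,
\[
\Val(G_i,\Phi) = \lim_{\disc(s_1)\to 1}\cdots\lim_{\disc(s_n)\to 1}\Val(G_i,\MDT(\vdisc,r)).
\]
Next, for every discount vector $\vdisc$ with $0<\disc(s)<1$ for all $s$, Lemma~\ref{lemm_conc_multi} gives
\[
|\Val(G_1,\MDT(\vdisc,r))(s)-\Val(G_2,\MDT(\vdisc,r))(s)| \leq (1+\dist(G_1,G_2))^{2\cdot|S|}-1
\]
for all $s\in S$, and by Proposition~\ref{prop_dist} this is at most $(1+\dista(G_1,G_2)/\eta)^{2\cdot|S|}-1$, where $\eta$ is the minimum positive transition probability in $G_1$. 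The right-hand side is a constant $B$ independent of $\vdisc$. Taking the iterated limit over $\disc(s_1),\ldots,\disc(s_n)\to 1$ on both sides of the inequality $|\Val(G_1,\MDT(\vdisc,r))(s)-\Val(G_2,\MDT(\vdisc,r))(s)| \leq B$, and using that each of these limits exists by Theorem~\ref{thrm_lit2}, we get $|\Val(G_1,\Phi)(s)-\Val(G_2,\Phi)(s)| \leq B$, which is exactly the claimed bound.

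The only subtlety—and the one place where a little care is needed rather than a routine limit argument—is to check that the iterated-limit characterization of Theorem~\ref{thrm_lit2} applies simultaneously to $G_1$ and $G_2$ with the \emph{same} order on states; this holds because the order depends only on $p$, and $G_1$ and $G_2$ carry the same state space and the same priority function. Beyond that, passing a fixed upper bound through a (finite, nested) sequence of limits is immediate, since the inequality holds for every value of $\vdisc$ in the relevant range. So the main work has already been front-loaded into Lemma~\ref{lemm_conc_multi} (which in turn rests on the Friedlin--Wentzell representation and Lemma~\ref{lemm:poly}); the present theorem is essentially a limiting corollary, structurally identical to the derivation of part~2 of Theorem~\ref{thrm_mc_multi} from its part~1.
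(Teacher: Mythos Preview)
Your proposal is correct and follows essentially the same approach as the paper: invoke Lemma~\ref{lemm_conc_multi} for the discount-independent bound on multi-discounted values, then pass to the parity limit via Theorem~\ref{thrm_lit2}. Your version is in fact slightly more detailed than the paper's proof (you spell out why the same state order applies to both $G_1$ and $G_2$ and how the fixed bound $B$ survives the iterated limit), but the underlying argument is identical.
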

\begin{proof}
The result follows from Theorem~\ref{thrm_lit2}, Lemma~\ref{lemm_conc_multi}
and the fact that the bound of Lemma~\ref{lemm_conc_multi} are independent 
of the discount factors and hence independent of taking the limits.
\qed
\end{proof}

In the following theorem we show that for structurally equivalent 
game structures, for all parity objectives, the value function 
is continuous in the absolute distance between the game structures.
We have already remarked (after Theorem~\ref{thrm_mc_multi}) that the structural
equivalence assumption is required in our proofs, and we show in Example~\ref{examp1}
that this assumption is necessary.

\begin{theorem}\label{thrm_val_con}
For all concurrent game structures $G_1$, for all parity objectives $\Phi$ 
\[
\lim_{\vare \to 0} \ \ \sup_{G_2 \in \equivclass{G_1}, \dista(G_1,G_2) \leq \vare} \ \ \sup_{s\in S} 
|\Val(G_1,\Phi)(s) - \Val(G_2,\Phi)(s)| =0.
\]
\end{theorem}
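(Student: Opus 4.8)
The plan is to derive Theorem~\ref{thrm_val_con} directly from the quantitative bound of Theorem~\ref{thrm_conc_multi} together with Proposition~\ref{prop_dist}. Fix a concurrent game structure $G_1$ and a parity objective $\Phi$, and let $\eta>0$ be the minimum positive transition probability in $G_1$. For any $\vare>0$ and any $G_2 \in \equivclass{G_1}$ with $\dista(G_1,G_2)\leq \vare$, Theorem~\ref{thrm_conc_multi} gives, for every $s\in S$,
\[
|\Val(G_1,\Phi)(s)-\Val(G_2,\Phi)(s)| \leq \Bigl(1+\tfrac{\dista(G_1,G_2)}{\eta}\Bigr)^{2\cdot|S|}-1 \leq \Bigl(1+\tfrac{\vare}{\eta}\Bigr)^{2\cdot|S|}-1,
\]
where the last inequality uses that $x\mapsto (1+x/\eta)^{2|S|}-1$ is monotonically increasing in $x\geq 0$. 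Since this upper bound is independent of the particular choice of $G_2$ and of $s$, it bounds the double supremum, so
\[
\sup_{G_2 \in \equivclass{G_1},\, \dista(G_1,G_2)\leq \vare}\ \sup_{s\in S} |\Val(G_1,\Phi)(s)-\Val(G_2,\Phi)(s)| \leq \Bigl(1+\tfrac{\vare}{\eta}\Bigr)^{2\cdot|S|}-1.
\]

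It then remains only to observe that $\eta$ and $|S|$ are fixed constants determined by $G_1$ (note every $G_2\in\equivclass{G_1}$ has the same state space $S$, so $|S|$ does not vary), and hence $\lim_{\vare\to 0}\bigl((1+\vare/\eta)^{2|S|}-1\bigr)=1^{2|S|}-1=0$. Combined with the fact that the quantity whose limit we want is non-negative, the squeeze theorem yields the claimed limit equals $0$. One small point to address explicitly is that the supremum over $G_2\in\equivclass{G_1}$ with $\dista(G_1,G_2)\le\vare$ is over a non-empty set (it contains $G_1$ itself, with distance $0$), so the expression is well defined and $\geq 0$.

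There is no real obstacle here: all the work has been done in establishing Theorem~\ref{thrm_conc_multi}, whose proof in turn rests on the discount-factor-independent Markov chain bound (Lemma~\ref{lemm_mc_multi}, Theorem~\ref{thrm_mc_multi}), its lifting to MDPs and concurrent games via fixing memoryless optimal strategies (Lemmas~\ref{lemm_mdp_multi} and~\ref{lemm_conc_multi}), and the limit characterization of parity values by multi-discounted values (Theorem~\ref{thrm_lit2}). The only thing worth being careful about is the uniformity: the bound must not depend on $G_2$ or on the state $s$, and indeed it does not, since $\eta$ is the minimum positive transition probability of the \emph{fixed} game $G_1$ and $|S|$ is fixed. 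This uniformity is exactly what allows the supremum to be pulled inside the bound before taking the limit.

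\begin{proof}
Fix a concurrent game structure $G_1$ and a parity objective $\Phi$, and let $\eta>0$ be the minimum positive transition probability in $G_1$; note $\eta$ and $|S|$ are constants depending only on $G_1$. For $\vare>0$, the set $\{G_2\in\equivclass{G_1}\mid \dista(G_1,G_2)\leq\vare\}$ is non-empty (it contains $G_1$), so the displayed supremum is well defined and non-negative. For every such $G_2$ and every $s\in S$, Theorem~\ref{thrm_conc_multi} gives
\[
|\Val(G_1,\Phi)(s)-\Val(G_2,\Phi)(s)| \leq \Bigl(1+\tfrac{\dista(G_1,G_2)}{\eta}\Bigr)^{2\cdot|S|}-1 \leq \Bigl(1+\tfrac{\vare}{\eta}\Bigr)^{2\cdot|S|}-1,
\]
the last step using that $t\mapsto (1+t/\eta)^{2|S|}-1$ is non-decreasing on $[0,\infty)$ together with $\dista(G_1,G_2)\leq\vare$. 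Taking the supremum over all such $G_2$ and all $s\in S$ of the left-hand side, which does not affect the (constant) right-hand side, yields
\[
0 \leq \sup_{G_2 \in \equivclass{G_1},\, \dista(G_1,G_2)\leq \vare}\ \sup_{s\in S} |\Val(G_1,\Phi)(s)-\Val(G_2,\Phi)(s)| \leq \Bigl(1+\tfrac{\vare}{\eta}\Bigr)^{2\cdot|S|}-1.
\]
As $\vare\to 0$ we have $(1+\vare/\eta)^{2|S|}-1 \to 1-1 = 0$, so by the squeeze theorem the middle expression tends to $0$, which is the assertion.
\qed
\end{proof}
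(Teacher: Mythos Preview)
Your proof is correct and follows essentially the same approach as the paper: apply the uniform bound of Theorem~\ref{thrm_conc_multi} (which depends only on $\eta$ and $|S|$, both fixed by $G_1$) to the supremum over $G_2$ and $s$, then let $\vare\to 0$. You simply make the uniformity and the squeeze argument more explicit than the paper does.
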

\begin{proof}
Let $\eta>0$ be the minimum positive transition probability in $G_1$.
By Theorem~\ref{thrm_conc_multi} we have
\[
\lim_{\vare \to 0}\ \sup_{G_2 \in \equivclass{G_1}, \dista(G_1,G_2) \leq \vare} \  \sup_{s\in S} 
|\Val(G_1,\Phi)(s) - \Val(G_2,\Phi)(s)| \leq
\lim_{\vare \to 0} 
\bigg(1+\frac{\vare}{\eta}\bigg)^{2\cdot |S|}-1
\]
The above limit equals to~0, and the desired result follows.
\qed
\end{proof}

\begin{example}[Structurally equivalence assumption necessary]\label{examp1}
In this example we show that in Theorem~\ref{thrm_val_con} the structural equivalence 
assumption is necessary, and thereby show that the result is tight.
We show an Markov chain $G_1$ and a family of Markov chains $G_2^\vare$, for $\vare>0$, 
such that $\dista(G_1,G_2^\vare) \leq \vare$ (but $G_1$ is not structurally equivalent
to $G_2^\vare$) with a parity objective $\Phi$ and we have
$\lim_{\vare \to 0} \sup_{s \in S} |\Val(G_1,\Phi)(s)-\Val(G_2^\vare,\Phi)(s)|=1$.
The Markov chains $G_1$ and $G_2^\vare$ are defined over the state
space $\set{s_0,s_1}$, and in $G_1$ both states have self-loops with 
probability~1, and in $G_2^\vare$ the self-loop at $s_0$ has probability
$1-\vare$ and the transition probability from $s_0$ to $s_1$ is
$\vare$ (see Fig~\ref{figure:buchi-lim} in appendix). 
Clearly, $\dista(G_1,G_2^\vare) =\vare$. 
The parity objective $\Phi$ requires to visit the state $s_1$ infinitely
often (i.e., assign priority~2 to $s_1$ and priority~1 to $s_0$).
Then we have $\Val(G_1,\Phi)(s_0)=0$ as the state $s_0$ is never left,
whereas in $G_2^\vare$ the state $s_1$ is the only closed recurrent set 
of the Markov chain and hence reached with probability~1 from $s_0$.
Hence $\Val(G_2^\vare,\Phi)(s_0)=1$.
It follows that $
\lim_{\vare \to 0} \sup_{s \in S} |\Val(G_1,\Phi)(s)-\Val(G_2^\vare,\Phi)(s)|=1$.
\qed
\end{example}

\begin{example}[Asymptotically tight bound for small distances] We now show that 
our quantitative bound for the value function difference is asymptotically optimal 
for small distances. 
Let us denote the absolute distance as $\vare$, and the quantitative bound we 
obtain in Theorem~\ref{thrm_conc_multi} is 
$(1+\frac{\vare}{\eta})^{2\cdot|S|}-1$, and if $\vare$ is small, then 
%%($\vare<< \eta$ and $\vare$ close to zero), 
we obtain the following approximate bound 
\[
%(1+\frac{\vare}{\eta-\vare})^{2\cdot|S|}-1 \approx 
\bigg(1+\frac{\vare}{\eta}\bigg)^{2\cdot |S|}-1 \approx 1 + 2\cdot |S| \cdot \frac{\vare}{\eta} -1 
= 2\cdot |S| \cdot \frac{\vare}{\eta}. 
\]
We now illustrate with an example (on structurally equivalent Markov chains) where the difference in the value 
function is $O(|S|\cdot \vare)$, for small $\vare$.
Consider the Markov chain defined on state space $S=\set{s_0,s_1,\ldots, s_{2n-1},s_{2n}}$
as follows: 
states $s_0$ and $s_{2n}$ are absorbing (states with self-loops of 
probability~1) and for a state $1\leq i \leq 2n-1$ we have 
%%\[
$\trans(s_i)(s_{i-1})=\frac{1}{2}+\vare$; and $\trans(s_i)(s_{i+1})=\frac{1}{2}-\vare$;
%%\] 
i.e., we have a Markov chain defined on a line from $0$ to $2n$ (with $0$ and $2n$ absorbing
states) and the chain moves towards $0$ with probability $\frac{1}{2}+\vare$ and towards
$2n$ with probability $\frac{1}{2}-\vare$ (see Fig~\ref{figure:asym} with complete details in appendix). 
Our goal is to estimate the probability to reach the state $s_0$, and let $v_i$ denote 
the probability to reach $s_0$ from the starting state $s_i$.
We show (details in appendix) that if $\vare=0$, then $v_n=\frac{1}{2}$ and for $0<\vare<\frac{1}{2}$, 
such that $\vare$ is close to~0, we have $v_n=\frac{1}{2}+n\cdot \vare$.
Observe that the Markov chain obtained for $\vare=0$ and $\frac{1}{2} > 
\vare >0$ are structurally equivalent. Thus the desired result follows.
%\begin{figure*}[t]
%\begin{center}
%\begin{picture}(85,25)(0,0)
%\node[Nmarks=n](n0)(0,12){$s_0$}
%\node[Nmarks=n](n1)(20,12){$s_1$}
%\node[Nmarks=n](n2)(40,12){$s_2$}
%\node[Nmarks=n](n3)(70,12){$s_{2n-2}$}
%\node[Nmarks=n](n4)(90,12){$s_{2n-1}$}
%\node[Nmarks=n](n5)(110,12){$s_{2n}$}
%\put(50,12){$\cdots$$\cdots$}
%\drawloop[loopangle=180](n0){$1$}
%\drawloop[loopangle=0](n5){$1$}
%\drawedge[ELpos=50, ELside=l, ELdist=0.5, curvedepth=3](n1,n0){$\frac{1}{2}+\vare$}
%\drawedge[ELpos=50, ELside=l, ELdist=0.5, curvedepth=3](n1,n2){$\frac{1}{2}-\vare$}
%\drawedge[ELpos=50, ELside=l, ELdist=0.5, curvedepth=3](n2,n1){$\frac{1}{2}+\vare$}
%\drawedge[ELpos=50, ELside=l, ELdist=0.5, curvedepth=3](n4,n5){$\frac{1}{2}-\vare$}
%\drawedge[ELpos=50, ELside=l, ELdist=0.5, curvedepth=3](n4,n3){$\frac{1}{2}+\vare$}
%\drawedge[ELpos=50, ELside=l, ELdist=0.5, curvedepth=3](n3,n4){$\frac{1}{2}-\vare$}
%\end{picture}
%\end{center}
%  \caption{Markov chains $G_1$ and $G_2^\vare$.}
%  \label{figure:asym}
%%\vspace{-2em}
%\end{figure*}
\qed
\end{example}

\section{Conclusion}
In this work we studied the robustness and continuity property of 
concurrent and turn-based stochastic parity games with respect to 
small imprecision in the transition probabilities.
We presented (i)~quantitative bounds on difference of the value functions 
and proved value continuity for concurrent parity games under the
structural equivalence assumption, and (ii)~showed robustness of all 
pure memoryless optimal strategies for structurally equivalent 
turn-based stochastic parity games.
We also showed that the structural equivalence assumption is
necessary and that our quantitative bounds are asymptotically
optimal for small imprecision. 
We believe our results will find applications in robustness analysis 
of various other classes of stochastic games.
%%The study of similar results for other metrics and other quantitative
%%objectives would be interesting directions for future work.

\clearpage

\section*{Appendix}

\section{Missing proofs of Section~2}
\begin{proof} {\em (of Proposition~\ref{prop_dist}).}
Consider $s \in S$, $a\in \mov_1(s),b\in \mov_2(s)$, and $t \in \supp(\trans_1(s,a,b))=\supp(\trans_2(s,a,b))$.
Then we have the following two inequalities: we consider 
$\frac{\trans_2(s,a,b)(t)}{\trans_1(s,a,b)(t)}$, and the argument 
for 
$\frac{\trans_1(s,a,b)(t)}{\trans_2(s,a,b)(t)}$ is symmetric.
We consider $\frac{\trans_2(s,a,b)(t)}{\trans_1(s,a,b)(t)}$ and 
if $\trans_2(s,a,b)(t) \leq \trans_1(s,a,b)(t)$, then 
$\frac{\trans_2(s,a,b)(t)}{\trans_1(s,a,b)(t)} \leq 1$, and 
otherwise we have the following inequality:
\[
\frac{\trans_2(s,a,b)(t)}{\trans_1(s,a,b)(t)} 
\leq 
\frac{\trans_1(s,a,b)(t)+\dista(G_1,G_2)}{\trans_1(s,a,b)(t)} 
= 1 +\frac{\dista(G_1,G_2)}{\trans_1(s,a,b)(t)} \leq 1+\frac{\dista(G_1,G_2)}{\eta}. 
\]
It follows that in both cases we have 
$\frac{\trans_2(s,a,b)(t)}{\trans_1(s,a,b)(t)} -1 \leq \frac{\dista(G_1,G_2)}{\eta}$.
The desired result follows from the above inequalities.
\qed
\end{proof}

\section{Missing proofs of Section~3}
We now  present the proof of Lemma~\ref{lemm:ratio} which is obtained as a 
simple extension of a result of Solan~\cite{Sol03}.

\begin{proof} {\em (of Lemma~\ref{lemm:ratio}).}
Fix a discount vector $\vdisc$. We construct a Markov chain $\ov{G}=(\ov{S},\ov{\trans})$ as 
follows: $\ov{S}=S \cup S_1$, where $S_1$ is a copy of states of $S$ (and for a state $s \in S$ 
we denote its corresponding copy as $s_1$); and the transition function $\ov{\trans}$ 
is defined below
\begin{enumerate}
\item $\ov{\trans}(s_1)(s_1)=1$ for all $s_1 \in S_1$ (i.e., all copy states are absorbing);
\item for $s \in S$ we have 
\[
\ov{\trans}(s)(t) =
\begin{cases}
(1-\disc(s)) & t=s_1; \\
\disc(s) \cdot \trans(s)(t) & t\in S; \\
0 & t\in S_1\setminus{s_1};
\end{cases}
\]
i.e., it goes to the copy with probability $(1-\disc(s))$, it follows
the transition $\trans$ in the original copy with probabilities multiplied 
by $\disc(s)$.
\end{enumerate}
We first show that for all $s_0$ and $s$ we have
\[
\MT(s_0,s,G,\vdisc)=
\Prb_{s_0}^{\ov{\trans}}(\theta_{\exit_S}=s_1);
\]
i.e., the expected mean-discounted time in $s$ when the original Markov chain 
starts in $s_0$ is the probability in the Markov chain $(\ov{S},\ov{\trans})$ 
that the first hitting state out of $S$ is the copy $s_1$ of the state $s$.
The claim is easy to verify as both $(\MT(s_0,s,G,\vdisc))_{s_0\in S}$ and 
$(\Prb_{s_0}^{\ov{\trans}}(\theta_{\exit_S}=s_1))_{s_0\in S}$ are the solutions 
of the following system of linear equations
\[
y_{t}= (1-\disc(t)) \cdot \indi_{t=s} + \sum_{z\in S} \disc(t)\cdot\trans(t)(z) \cdot y_z \quad \forall t \in S.
\]
The fact that $(\MT(s_0,s,G,\vdisc))_{s_0\in S}$ is the solution of the above 
equation follows from the results of discounted reward Markov chains 
(detailed proofs with uniform discount factor for MDPs is available in~\cite{FV97} 
(e.g., equation 2.15 of~\cite{FV97}),
and specialization to Markov chains and generalization to discount factor attached
to every state is straightforward).
The fact that $(\Prb_{s_0}^{\ov{\trans}}(\theta_{\exit_S}=s_1))_{s_0\in S}$ 
is the solution of the above equation follows from the results of characterization 
of hitting time for transient Markov chains (see~\cite{Derman} for details).
Also the above system of linear equations has a unique solution.
The uniqueness of the solution follows from the fact that this is a 
contraction mapping, and the proof is as follows:
let $(y^1_z)_{z \in S}$ and $(y^2_z)_{z \in S}$ be two solutions of the system.
We chose $z^*\in S$ such that $z^*=\arg\max_{z\in S} |y^1_z -y^2_z|$, i.e., $z^*$ is 
a state that maximizes the difference of the two solutions.
Let $\eta=|y^1_{z^*} - y^2_{z^*}|$. 
As $y^1$ and $y^2$ are solutions of the above system we have by the triangle 
inequality
\[
\begin{array}{rcl}
0 \leq \eta =|y^1_{z^*} - y^2_{z^*}| & \leq & 
\displaystyle \sum_{t \in S} \disc(t) \cdot |y^1_t-y^2_t| \\[1ex] 
& \leq & 
\displaystyle \eta \cdot \sum_{t\in S} \disc(t) \cdot \trans(s_0)(t) 
\leq \eta \cdot \max_{t\in S} \disc(t) \cdot \sum_{t \in S}\trans(s_0)(t).
\end{array}
\] 
Since $\sum_{t \in S}\trans(s_0)(t)=1$, it follows that 
$\eta \leq \eta \cdot\max_{t\in S} \disc(t)$.
Since $\max_{t\in S}\disc(t) < 1$ it follows that we must have $\eta=0$ and hence the 
two solutions must coincide.

We now claim that 
$\Prb_{s_0}^{\ov{\trans}}(\exit_{S} < \infty) >0$ for all $s_0 \in S$.
This follows since for all $s \in S$ we have $\ov{\trans}(s)(s_1) =(1-\disc(s))>0$ 
and since $s_1 \not\in S$ we have 
$\Prb_{s_0}^{\ov{\trans}}(\exit_{S} =2)= (1-\disc(s_0)) >0$.
Now we observe that we can apply Theorem~\ref{thrm-fw} on the Markov chain 
$\ov{G}=(\ov{S},\ov{\trans})$  with $S$ as the set $C$ of states of Theorem~\ref{thrm-fw}, 
and obtain the result.
Indeed the terms $\alpha_f$ and $\beta_f(s,t)$ are independent of 
$\trans$, and the two products of Equation (\ref{eq1}) each contains at most $|S|$
terms of the form $\ov{\trans}(s)(t)$ for $s,t \in \ov{S}$.
Thus the desired result follows.
\qed
\end{proof}

\begin{example}[Illustration of construction of Lemma~\ref{lemm:ratio}]
We now illustrate the construction of Lemma~\ref{lemm:ratio} with the aid of 
some examples.
Consider the Markov chain $G$ with states $s$ and $t$ such that 
$t$ is absorbing and the transition from $s$ to $t$ has 
probability~1, and let the discount factor be $1/3$ for all states.
The Markov chain $G$ along with $\ov{G}$ is shown in Fig.~\ref{fig:illus_1}.
If we start at $s$, the mean-discounted time at $t$ is given by 
\[
\frac{1/3^2 + 1/3^3 + \ldots }{ 1/3 + 1/3^2 + 1/3^3 +\ldots}
=\frac{1/9\cdot 3/2}{1/3 \cdot 3/2}= \frac{1}{3}.
\]
In the Markov chain $\ov{G}$, the probability to reach $t$ from $s$ is 
$1/3$, and once $t$ is reached the exit state is $t_1$ with probability~1.
Hence the probability to exit through state $t_1$ is also $1/3$.

\begin{figure*}[t]
\begin{center}
\begin{picture}(85,25)(0,0)
\node[Nmarks=n](n0)(0,18){$s$}
\node[Nmarks=n](n1)(20,18){$t$}
\node[Nmarks=n](n2)(70,18){$s$}
\node[Nmarks=n](n3)(90,18){$t$}
\node[Nmarks=n](n4)(70,0){$s_1$}
\node[Nmarks=n](n5)(90,0){$t_1$}
\drawloop(n1){$1$}
\drawloop(n3){$1/3$}
\drawloop[loopangle=180](n4){$1$}
\drawloop[loopangle=0](n5){$1$}
\drawedge[ELpos=50, ELside=r, ELdist=0.5, curvedepth=0](n0,n1){$1$}
\drawedge[ELpos=50, ELside=r, ELdist=0.5, curvedepth=0](n2,n3){$1/3$}
\drawedge[ELpos=50, ELside=r, ELdist=0.5, curvedepth=0](n2,n4){$2/3$}
\drawedge[ELpos=50, ELside=r, ELdist=0.5, curvedepth=0](n3,n5){$2/3$}
\end{picture}
\end{center}
  \caption{Markov chains $G$ and $\ov{G}$.}
  \label{fig:illus_1}
%%\vspace{-2em}
\end{figure*}

\begin{figure*}[t]
\begin{center}
\begin{picture}(85,25)(0,0)
\node[Nmarks=n](n0)(0,18){$s$}
\node[Nmarks=n](n1)(20,18){$t$}
\node[Nmarks=n](n2)(70,18){$s$}
\node[Nmarks=n](n3)(90,18){$t$}
\node[Nmarks=n](n4)(70,0){$s_1$}
\node[Nmarks=n](n5)(90,0){$t_1$}
\drawloop[loopangle=180](n4){$1$}
\drawloop[loopangle=0](n5){$1$}
\drawedge[ELpos=50, ELside=r, ELdist=0.5, curvedepth=-6](n0,n1){$1$}
\drawedge[ELpos=50, ELside=r, ELdist=0.5, curvedepth=-6](n1,n0){$1$}

\drawedge[ELpos=50, ELside=r, ELdist=0.5, curvedepth=-6](n2,n3){$1/3$}
\drawedge[ELpos=50, ELside=r, ELdist=0.5, curvedepth=-6](n3,n2){$1/3$}
\drawedge[ELpos=50, ELside=r, ELdist=0.5, curvedepth=0](n2,n4){$2/3$}
\drawedge[ELpos=50, ELside=r, ELdist=0.5, curvedepth=0](n3,n5){$2/3$}
\end{picture}
\end{center}
  \caption{Markov chains $G$ and $\ov{G}$.}
  \label{fig:illus_2}
%%\vspace{-2em}
\end{figure*}

We now consider another example to illustrate further. 
Consider the Markov chain $G$ and $\ov{G}$ in Fig~\ref{fig:illus_2}, 
where in $G$ it alternates 
between state $s$ and $t$, and the discount factor is $1/3$.
If we start at state $s$, the mean-discounted time at $t$ 
is given by 
\[
\frac{1/3^2 + 1/3^4 + 1/3^6 + \ldots}{1/3 + 1/3^2 + 1/3^3 +\ldots}
=\frac{1/9\cdot 9/8}{ 1/3 \cdot 3/2}=\frac{1}{4}.
\]
The probability to exit through $t_1$ in $\ov{G}$ in 2-steps is 
$1/3\cdot 2/3$, in 4-steps is $1/3^3\cdot 2/3$ and so on.
Hence the probability to exit through $t_1$ in $\ov{G}$ is 
\[
2/3 \cdot (1/3 + 1/3^3 + 1/3^5 + \ldots) = 2/3 \cdot 1/3 \cdot 9/8=1/4.
\]
The above examples show how the mean-discounted time in $G$ and the 
exit probability in $\ov{G}$ has the same value.
\qed
\end{example}

\begin{proof} {\em (of Lemma~\ref{lemm:poly}).}
We first write $h(x)$ as follows:
\[
h(x)= \sum_{i=1}^\ell a_i \cdot \prod_{j=1}^{n_i} x_{k_{ij}}, 
\]
where $\ell\in \Nats$, for all $i=1,2,\ldots,\ell$ we have $a_i\geq 0$,  
$n_i\leq n$, and $1 \leq k_{ij} \leq k$ for each $j=1,2,\ldots,n_i$.
By the hypothesis of the lemma, for all $i=1,2,\ldots, \ell$ we have 
\[
\frac{1}{(1+\vare)^n} \cdot \prod_{j=1}^{n_i} y'_{k_{ij}} \leq 
\prod_{j=1}^{n_i} y_{k_{ij}} \leq (1+\vare)^n \cdot \prod_{j=1}^{n_i} y'_{k_{ij}}.
\]
Since every $a_i\geq 0$, multiplying the above inequalities by $a_i$ and summing 
over $i=1,2,\ldots,\ell$ yields the desired result.
\qed
\end{proof}

\begin{proof} {\em (of Lemma~\ref{lemm_mc_multi}).}
We first observe that for a Markov chain $G$ we have 
$\Val(G,\MDT(\vdisc,r))(s)=\sum_{t\in S} r(t)\cdot \MT(s,t,G,\vdisc)$,
i.e., the value function for a state $s$ is obtained as the sum of 
the product of mean-discounted time of states and the rewards with $s$ as
the starting state.  
Hence by Lemma~\ref{lemm:poly} it follows that 
$\Val(G,\MDT(\vdisc,r))(s)$ can be expressed as a ratio $\frac{g_1(\cdot)}{g_2(\cdot)}$ 
of two polynomials of degree at most $|S|$ over $|S|^2$ variables.
Hence we have 
\[
\frac{\Val(G_1,\MDT(\vdisc,r))(s)}{\Val(G_2,\MDT(\vdisc,r))(s)} =
\frac{g_1(\trans)}{g_1(\trans')} \cdot \frac{g_2(\trans')}{g_2(\trans)} 
\]
Let $\vare=\dist(G_1,G_2)$.
By definition for all $s_1,s_2\in S$, if $s_2 \in \Supp(\trans(s_1))$, then 
we have both $\frac{\trans(s_1)(s_2)}{\trans'(s_1)(s_2)}$ and 
$\frac{\trans'(s_1)(s_2)}{\trans(s_1)(s_2)}$ are between $\frac{1}{1+\vare}$ and 
$1+\vare$.
It follows from Lemma~\ref{lemm:poly}, with $k=|S|^2$ that
\[
(1+\vare)^{-|S|} \leq \frac{g_i(\trans)}{g_i(\trans')} \leq (1+\vare)^{|S|}, \qquad \text{ for } i \in \set{1,2}.
\]
Thus we have
\[
(1+\vare)^{-2\cdot|S|} \leq 
\frac{g_1(\trans)}{g_1(\trans')}\cdot  
\frac{g_2(\trans')}{g_2(\trans)}   
\leq (1+\vare)^{2\cdot |S|}.
\] 
Hence we have
\[
(1+\vare)^{-2\cdot |S|} \leq 
\frac{\Val(G_1,\MDT(\vdisc,r))(s)}{\Val(G_2,\MDT(\vdisc,r))(s)} 
\leq (1+\vare)^{2\cdot |S|}
\]
We consider the case when 
$\Val(G_1,\MDT(\vdisc,r))(s)\geq \Val(G_2,\MDT(\vdisc,r))(s)$,
and the other case argument is symmetric. 
We also assume without loss of generality that $\Val(G_2,\MDT(\vdisc,r))(s)>0$. 
Otherwise if $\Val(G_2,\MDT(\vdisc,r))(s)=0$, since rewards are non-negative, 
it follows that no state with positive reward is reachable from $s$ both 
in $G_1$ and $G_2$ (because if they are reachable, then they are reachable with
positive probability and then the value is positive), and 
hence $\Val(G_1,\MDT(\vdisc,r))=\Val(G_2,\MDT(\vdisc,r))=0$ 
and the result of the lemma follows trivially.
Since we assume that $\Val(G_1,\MDT(\vdisc,r))(s)\geq \Val(G_2,\MDT(\vdisc,r))(s)$ and
$\Val(G_2,\MDT(\vdisc,r))(s)>0$, 
we have 
\[
\begin{array}{rcl}
|\Val(G_1,\MDT(\vdisc,r))(s) &- & \Val(G_2,\MDT(\vdisc,r))(s)| \\
& = & 
\displaystyle \Val(G_2,\MDT(\vdisc,r))(s)  \cdot \bigg( 
\frac{\Val(G_1,\MDT(\vdisc,r))(s)}{\Val(G_2,\MDT(\vdisc,r))(s)} -1 \bigg)\\[2ex] 
& \leq &  
\Val(G_2,\MDT(\vdisc,r))(s)  \cdot \big((1+ \vare)^{2\cdot|S|}-1\big) 
\end{array}
\]
Since the reward function is bounded by~1, it follows that 
$\Val(G_2,\MDT(\vdisc,r))(s) \leq 1$, and 
hence we have 
\[
|\Val(G_1,\MDT(\vdisc,r))(s) -\Val(G_2,\MDT(\vdisc,r))(s)| \leq (1+\dist(G_1,G_2))^{2\cdot|S|}-1.
\]
The desired result follows.
\qed
\end{proof}

\section{Missing proofs of Section~4}

\subsection{Details of Subsection 4.1}
We first show the desired result for MDPs and then extend to turn-based stochastic games.

\begin{theorem}\label{thrm_mdp}
Let $G_1$ be a player-1 MDP such that the minimum positive transition 
probability is $\eta>0$. 
The following assertions hold:
\begin{enumerate}
\item For all player-1 MDPs $G_2 \in \equivclass{G_1}$, 
for all parity objectives $\Phi$ and for all $s \in S$ we have 
\[
\begin{array}{rcl}
|\Val(G_1,\Phi)(s) - \Val(G_2,\Phi)(s)| 
& \leq &
(1+ \dist(G_1,G_2))^{2\cdot|S|}-1 \\[2ex]
& \leq &
\displaystyle
\bigg(1+\frac{\dista(G_1,G_2)}{\eta}\bigg)^{2\cdot |S|}-1
\end{array}
\]

\item For $\vare>0$, let 
$\beta \leq \frac{\eta}{2} \cdot\big((1+\frac{\vare}{2})^{\frac{1}{2\cdot|S|}}-1)$. 
%%such that $\beta \leq \frac{\eta}{2}$.
For all $G_2 \in \equivclass{G_1}$ such that $\dista(G_1,G_2) \leq \beta$, for all 
parity objectives $\Phi$, every pure memoryless
optimal strategy $\stra_1$ in $G_1$ is an $\vare$-optimal strategy in $G_2$. 
In other words, for the interval $[0,\beta)$, every pure memoryless optimal
strategy in $G_1$ is an $\vare$-optimal strategy in all structurally equivalent
MDPs of $G_1$ such that the distance lies in the interval $[0,\beta)$.

%%\item There exists $\beta >0$ and a pure memoryless strategy $\stra_1$ such 
%%that $\stra_1$ is an optimal strategy for all player-1 MDPs 
%%$G_2 \in \equivclass{G_1}$ such that $\dista(G_1,G_2)\leq \beta$.
\end{enumerate}
\end{theorem}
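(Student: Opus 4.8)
The plan is to deduce the parity bound from the multi-discounted bound by a limiting argument. Since a player-1 MDP is in particular a concurrent game structure, the inequality is already a special case of Theorem~\ref{thrm_conc_multi}; independently, one gets it directly as follows. Let $r$ assign reward~$1$ to even-priority states and~$0$ to odd-priority states. By Lemma~\ref{lemm_mdp_multi}, for every discount vector $\vdisc$ and every $s\in S$ we have $|\Val(G_1,\MDT(\vdisc,r))(s)-\Val(G_2,\MDT(\vdisc,r))(s)|\le (1+\dist(G_1,G_2))^{2\cdot|S|}-1$, and this bound does not involve $\vdisc$. By Theorem~\ref{thrm_lit2} the parity value functions of $G_1$ and $G_2$ are obtained from the respective multi-discounted value functions by the nested limits $\disc(s_i)\to 1$ in a fixed order; since the bound is uniform over all discount vectors, it is preserved under these limits, giving the first inequality. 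Proposition~\ref{prop_dist} ($\dist(G_1,G_2)\le \dista(G_1,G_2)/\eta$) then yields the second.

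\textbf{Part 2 (robustness of optimal strategies).} Fix $\vare>0$ and $G_2\in\equivclass{G_1}$ with $\dista(G_1,G_2)\le\beta$, and let $\stra_1$ be a pure memoryless optimal strategy of $G_1$ for $\Phi$ (such a strategy exists by Theorem~\ref{thrm_lit1}, item~2). The key structural remark is that, because $G_1$ and $G_2$ are player-1 MDPs, fixing $\stra_1$ produces Markov chains $G_1\restr\stra_1$ and $G_2\restr\stra_1$ on the same state space $S$; and because $\stra_1$ is \emph{pure}, the transition probabilities of $G_i\restr\stra_1$ are literally a sub-collection of those of $G_i$, so (i)~the minimum positive transition probability of $G_1\restr\stra_1$ is still at least $\eta$, and (ii)~by Proposition~\ref{prop_stra_fix}, $G_1\restr\stra_1$ and $G_2\restr\stra_1$ are structurally equivalent with $\dista(G_1\restr\stra_1,G_2\restr\stra_1)\le\dista(G_1,G_2)\le\beta$. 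Applying the Markov-chain estimate of Theorem~\ref{thrm_mc_multi} (item~2) to these two chains gives, for every $s\in S$,
\[
\bigl|\Val(G_1\restr\stra_1,\Phi)(s)-\Val(G_2\restr\stra_1,\Phi)(s)\bigr|\;\le\;\bigl(1+\beta/\eta\bigr)^{2\cdot|S|}-1 .
\]

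Next I would chain the two estimates. By optimality of $\stra_1$ in $G_1$, and since player~2 has no genuine choice in an MDP, $\Val(G_1\restr\stra_1,\Phi)(s)=\Val^{\stra_1}(G_1,\Phi)(s)=\Val(G_1,\Phi)(s)$, while $\Val(G_2\restr\stra_1,\Phi)(s)=\Val^{\stra_1}(G_2,\Phi)(s)$ is exactly what player~1 guarantees in $G_2$ by playing $\stra_1$. Combining the displayed inequality with the Part~1 bound applied to $G_1$ and $G_2$ gives
\[
\Val^{\stra_1}(G_2,\Phi)(s)\;\ge\;\Val(G_1,\Phi)(s)-\bigl[(1+\beta/\eta)^{2\cdot|S|}-1\bigr]\;\ge\;\Val(G_2,\Phi)(s)-2\bigl[(1+\beta/\eta)^{2\cdot|S|}-1\bigr].
\]
It then remains only to check, by an elementary computation, that $\beta\le\frac{\eta}{2}\bigl((1+\frac{\vare}{2})^{1/(2\cdot|S|)}-1\bigr)$ implies $\beta/\eta\le(1+\frac{\vare}{2})^{1/(2\cdot|S|)}-1$, hence $(1+\beta/\eta)^{2\cdot|S|}\le 1+\frac{\vare}{2}$, hence $2[(1+\beta/\eta)^{2\cdot|S|}-1]\le\vare$. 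Since $s$ was arbitrary, $\stra_1$ is $\vare$-optimal in $G_2$.

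\textbf{Main obstacle.} There is no deep difficulty here: the real work is already packaged in the discount-independent Markov-chain bound (Theorem~\ref{thrm_mc_multi}) and in the existence of pure memoryless optimal strategies for parity objectives. The only point requiring care is that an optimal strategy of $G_1$ need \emph{not} be optimal in $G_2$, so one cannot compare $\Val^{\stra_1}(G_2,\Phi)$ with $\Val(G_2,\Phi)$ directly; one must route through $\Val(G_1,\Phi)$, and the approximation error is incurred twice --- once from evaluating $\stra_1$ in the perturbed Markov chain, once from the value gap between $G_1$ and $G_2$. This is precisely why the factor $2$ appears and why the threshold $\beta$ is chosen so that each contribution is at most $\vare/2$; exploiting the \emph{purity} of $\stra_1$ (to keep $\eta$ as a lower bound on the perturbed chain's minimal positive probability) is what makes the two applications of the Markov-chain bound go through with the same constants.
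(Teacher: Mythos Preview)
Your proposal is correct, and Part~2 is essentially identical to the paper's argument: fix a pure memoryless optimal $\stra_1$ in $G_1$, pass to the Markov chains $G_i\restr\stra_1$, apply Theorem~\ref{thrm_mc_multi}, route through $\Val(G_1,\Phi)$ via Part~1, and absorb the doubled error with the choice of $\beta$.

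For Part~1 you take a different route from the paper. You argue via the multi-discounted MDP bound (Lemma~\ref{lemm_mdp_multi}) together with the limit characterization of parity values (Theorem~\ref{thrm_lit2}), or alternatively just invoke the concurrent-game result Theorem~\ref{thrm_conc_multi} of which this is a special case. The paper instead proves Part~1 directly at the parity level: it fixes a pure memoryless optimal \emph{parity} strategy $\stra_1$ in $G_1$ (available by Theorem~\ref{thrm_lit1}), observes $\Val(G_2,\Phi)(s)\ge \Val(G_2\restr\stra_1,\Phi)(s)$, applies the Markov-chain parity bound of Theorem~\ref{thrm_mc_multi} to $G_1\restr\stra_1$ and $G_2\restr\stra_1$, and uses optimality of $\stra_1$ to identify $\Val(G_1\restr\stra_1,\Phi)(s)$ with $\Val(G_1,\Phi)(s)$. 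The paper's version is more self-contained (it needs only Section~\ref{sec:mc} and Proposition~\ref{prop_stra_fix}, not the Section~\ref{sec:conc} lemmas or the limit theorem), while your version highlights that Part~1 is really an instance of the general concurrent bound; both are valid and neither introduces circularity, since Lemma~\ref{lemm_mdp_multi} and Theorem~\ref{thrm_conc_multi} do not rely on Theorem~\ref{thrm_mdp}.
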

\begin{proof} We prove the two parts below.
\begin{enumerate}
\item Without loss of generality, let $\Val(G_1,\Phi)(s)\geq \Val(G_2,\Phi)(s)$.
Let $\stra_1$ be a pure memoryless optimal strategy in $G_1$ and such a strategy
exists by Theorem~\ref{thrm_lit1}.
Then we have the following inequality
\[
\begin{array}{rcl}
\Val(G_2,\Phi)(s) 
& \geq &
\Val(G_2 \restr \stra_1,\Phi)(s) \\[1ex] 
& \geq & 
\Val(G_1 \restr \stra_1,\Phi)(s) -
\big((1+ \dist(G_1,G_2))^{2\cdot|S|}-1\big) \\[1ex]
& = &
\Val(G_1,\Phi)(s) -
\big((1+ \dist(G_1,G_2))^{2\cdot|S|}-1\big) 
\end{array}
\]
The (in)equalities are obtained: the first inequality follows because
the value in $G_2$ is at least the value in $G_2$ obtained by fixing a
particular strategy (in this case $\stra_1)$; 
the second inequality is obtained by appying Theorem~\ref{thrm_mc_multi} 
on the structurally equivalent Markov chains $G_1\restr \stra_1$ and $G_2 \restr \stra_1$;
and the final equality follows since $\stra_1$ is an optimal strategy in $G_1$.
The desired result follows.

\item Let $G_2 \in \equivclass{G_1}$ such that $\dista(G_1,G_2)\leq \beta$. 
Let $\stra_1$ be any pure memoryless optimal strategy in $G_1$. 
Then we have the following inequality
\[
\begin{array}{rcl}
\Val(G_2 \restr \stra_1,\Phi)(s) 
& \geq & 
\Val(G_1 \restr \stra_1,\Phi)(s) - 
\big((1+ \dist(G_1,G_2))^{2\cdot|S|}-1\big) \\[1ex]
& = & \Val(G_1,\Phi)(s) - 
\big((1+ \dist(G_1,G_2))^{2\cdot|S|}-1\big) \\[1ex]
& \geq &  
\Val(G_2,\Phi)(s) - 
2\cdot \big((1+ \dist(G_1,G_2))^{2\cdot|S|}-1 \big).
\end{array}
\]
The first inequality is a consequence of Theorem~\ref{thrm_mc_multi} applied
on Markov chains $G_2 \restr \stra_1$ and $G_1 \restr \stra_1$; 
the equality follows from the fact $\stra_1$ is an optimal strategy 
in $G_1$; and the infinal equality follows by applying the 
result of part~1.
Hence to prove that $\stra_1$ is $\vare$-optimal in $G_2$
we need to show that 
\begin{eqnarray}\label{eq-diff1}
2\cdot \big((1+ \dist(G_1,G_2))^{2\cdot|S|}-1 \big) \leq \vare
\end{eqnarray}
We have 
\[
(1+ \dist(G_1,G_2)) \leq 
%%\bigg(1+\frac{\dista(G_1,G_2)}{\eta-\dista(G_1,G_2)}\bigg) \leq
\bigg(1+\frac{\dista(G_1,G_2)}{\eta}\bigg); 
\]
the inequality follows from Proposition~\ref{prop_dist}. 
%%and the second inequality follows since 
%%$\dista(G_1,G_2) \leq \beta \leq \frac{\eta}{2}$ (note that by assumption 
%%we have $\dista(G_1,G_2)\leq \beta$).
Hence to prove inequality (\ref{eq-diff1}) it suffices to show that 
\[
\bigg(1+\frac{\beta}{\eta}\bigg)^{2\cdot |S|}  \leq 1+\frac{\vare}{2}.
\]
Since $\beta \leq \frac{\eta}{2} \cdot\big((1+\frac{\vare}{2})^{\frac{1}{2\cdot|S|}}-1)$,
we obtain the desired inequality. 

\end{enumerate}
The desired result follows.
\qed
\end{proof}

\begin{proof} {\em (of Theorem~\ref{thrm_tb_stochastic}).}
The proof is essentially to repeat the proof of Theorem~\ref{thrm_mdp}: as in 
MDPs pure memoryless optimal strategies exist in turn-based stochastic games
with parity objectives (Theorem~\ref{thrm_lit1}); and once a pure memoryless
strategy is fixed in a turn-based stochastic game we obtain an MDP.
Since Theorem~\ref{thrm_mdp} extend the result of Theorem~\ref{thrm_mc_multi} 
from Markov chains to MDPs, the proof for the desired result follows 
by mimicking the proof of Theorem~\ref{thrm_mdp} and instead of using the
result of Theorem~\ref{thrm_mc_multi} for Markov chains using the result 
of Theorem~\ref{thrm_mdp} for MDPs.
\qed
\end{proof}

\subsection{Details of Subsection 4.2}

\begin{proof} {\em (of Lemma~\ref{lemm_mdp_multi}).}
The proof is essentially mimicking the proof of part(1) of
Theorem~\ref{thrm_mdp}.
Without loss of generality, let $\Val(G_1,\MDT(\vdisc,r))(s)\geq \Val(G_2,\MDT(\vdisc,r))(s)$.
Let $\stra_1$ be a pure memoryless optimal strategy in $G_1$ and such a strategy
exists by Theorem~\ref{thrm_lit1}.
Then we have the following inequality
\[
\begin{array}{rcl}
\Val(G_2,\MDT(\vdisc,r))(s) 
& \geq &
\Val(G_2 \restr \stra_1,\MDT(\vdisc,r))(s) \\[1ex]
& \geq & 
\Val(G_1 \restr \stra_1,\MDT(\vdisc,r))(s) -
\big((1+ \dist(G_1,G_2))^{2\cdot|S|}-1\big) \\[1ex]
& = &
\Val(G_1,\MDT(\vdisc,r))(s) -
\big((1+ \dist(G_1,G_2))^{2\cdot|S|}-1\big) 
\end{array}
\]
The (in)equalities are obtained: the first inequality follows because
the value in $G_2$ is at least the value in $G_2$ obtained by fixing a
particular strategy (in this case $\stra_1)$; 
the second inequality is obtained by appying Theorem~\ref{thrm_mc_multi} 
on the structurally equivalent Markov chains $G_1\restr \stra_1$ and $G_2 \restr \stra_1$;
and the final equality follows since $\stra_1$ is an optimal strategy in $G_1$.
The desired result follows.
\qed
\end{proof}

\begin{proof} {\em (of Lemma~\ref{lemm_conc_multi}).}
The proof is essentially mimicking the proof of Lemma~\ref{lemm_mdp_multi}.
Without loss of generality, let $\Val(G_1,\MDT(\vdisc,r))(s)\geq \Val(G_2,\MDT(\vdisc,r))(s)$.
Let $\stra_1$ be a randomized memoryless optimal strategy in $G_1$ and such a strategy
exists by Theorem~\ref{thrm_lit1}.
Then we have the following inequality
\[
\begin{array}{rcl}
\Val(G_2,\MDT(\vdisc,r))(s) 
& \geq &
\Val(G_2 \restr \stra_1,\MDT(\vdisc,r))(s) \\[1ex]
&\geq & 
\Val(G_1 \restr \stra_1,\MDT(\vdisc,r))(s) -
\big((1+ \dist(G_1,G_2))^{2\cdot|S|}-1\big) \\[1ex]
& = &
\Val(G_1,\MDT(\vdisc,r))(s) -
\big((1+ \dist(G_1,G_2))^{2\cdot|S|}-1\big) 
\end{array}
\]
The argument for the inequalities are exactly the same as in Lemma~\ref{lemm_mdp_multi}.
The desired result follows.
\qed
\end{proof}

\begin{figure*}[t]
\begin{center}
\begin{picture}(85,25)(0,0)
\node[Nmarks=n](n0)(0,12){$s_0$}
\node[Nmarks=n](n1)(20,12){$s_1$}
\node[Nmarks=n](n2)(70,12){$s_0$}
\node[Nmarks=n](n3)(90,12){$s_1$}
\drawloop(n1){$1$}
\drawloop(n2){$1-\vare$}
\drawloop(n0){$1$}
\drawloop(n3){$1$}
\drawedge[ELpos=50, ELside=r, ELdist=0.5, curvedepth=0](n2,n3){$\vare$}
\end{picture}
\end{center}
  \caption{Markov chains $G_1$ and $G_2^\vare$ for Example~1.}
  \label{figure:buchi-lim}
%%\vspace{-2em}
\end{figure*}

\begin{example}[Asymptotically tight bound for small distances] We now show that 
the our quantitative bound for the value function difference is asymptotically optimal 
for small distances. 
Let us denote the absolute distance as $\vare$, and quantitative bound we 
obtain in Theorem~\ref{thrm_conc_multi} is $(1+\frac{\vare}{\eta-\vare})^{2\cdot|S|}-1$, and if $\vare$ is small 
($\vare<< \eta$ and $\vare$ close to zero), we obtain the following approximate bound 
\[
(1+\frac{\vare}{\eta-\vare})^{2\cdot|S|}-1 
\approx (1+\frac{\vare}{\eta})^{2\cdot |S|}-1 \approx 1 + 2\cdot |S| \cdot \frac{\vare}{\eta} -1 
= 2\cdot |S| \cdot \frac{\vare}{\eta}. 
\]
We now illustrate with an example (on structurally equivalent Markov chains) where the difference in the value 
function is $O(|S|\cdot \vare)$, for small $\vare$.
Consider the Markov chain defined on state space $S=\set{s_0,s_1,\ldots, s_{2n-1},s_{2n}}$
as follows: 
states $s_0$ and $s_{2n}$ are absorbing (states with self-loops of 
probability~1) and for a state $1\leq i \leq 2n-1$ we have 
\[
\trans(s_i)(s_{i-1})=\frac{1}{2}+\vare; \quad \trans(s_i)(s_{i+1})=\frac{1}{2}-\vare;
\] 
i.e., we have a Markov chain defined on a line from $0$ to $2n$ (with $0$ and $2n$ absorbing
states) and the chain moves towards $0$ with probability $\frac{1}{2}+\vare$ and towards
$2n$ with probability $\frac{1}{2}-\vare$ (see Fig~\ref{figure:asym}). 
Our goal is to estimate the probability to reach the state $s_0$, and let $v_i$ denote 
the probability to reach $s_0$ from the starting state $s_i$.
Then we have the following simple recurrence for $1\leq i \leq 2n-1$
\[
v_i =(\frac{1}{2}+\vare)\cdot v_{i-1} + (\frac{1}{2}-\vare)\cdot v_{i+1};  
\] 
and $v_0=1$ and $v_{2n}=0$.
We will consider $\vare\geq 0$ such that $\vare$ is very small and hence higher order 
terms (like $\vare^2$) can be ignored. 
We claim that the values $v_i$ can be expressed as the following recurrence:
$v_{i+1}=(\frac{1}{2}+\vare) \cdot c_i \cdot v_i$, where $c_i=\frac{4}{4-c_{i+1}}$. 
The proof is by induction and is shown below:
\[
\begin{array}{rcl}
v_i & = & 
(\frac{1}{2}+\vare)\cdot v_{i-1} + (\frac{1}{2}-\vare)\cdot v_{i+1} \\[1ex]
 & = & 
(\frac{1}{2}+\vare)\cdot v_{i-1} + (\frac{1}{2}-\vare)\cdot (\frac{1}{2} +\vare)\cdot c_i \cdot v_{i} \quad \text{(by inductive hypothesis $v_{i+1}=
(\frac{1}{2}+\vare)\cdot c_i \cdot v_i$)}\\[1ex]
&= & (\frac{1}{2}+\vare)\cdot v_{i-1} + (\frac{1}{4}-\vare^2) \cdot c_i \cdot v_{i}\\[1ex] 
&= & (\frac{1}{2}+\vare)\cdot v_{i-1} + \frac{1}{4} \cdot c_i \cdot v_{i} \qquad \text{(ignoring $\vare^2$)}
\end{array}
\]
It follows that $v_i=(\frac{1}{2}+\vare)\cdot \frac{4}{4-c_i}\cdot v_{i-1}= 
(\frac{1}{2}+\vare)\cdot c_{i-1} \cdot v_{i-1}$.
Hence we have 
\[
\begin{array}{rcl}
v_1 
& = & 
(\frac{1}{2}+\vare)\cdot v_0 + 
(\frac{1}{2}-\vare)\cdot v_2 \\[1ex]
& = & 
(\frac{1}{2}+\vare)\cdot 1 + 
(\frac{1}{2}-\vare)\cdot (\frac{1}{2}+\vare)\cdot c_1 \cdot v_1 \\[1ex]
& = & 
(\frac{1}{2}+\vare) + (\frac{1}{4}-\vare^2)\cdot c_1 \cdot v_1 \\[1ex] 
& = & (\frac{1}{2}+\vare) + \frac{1}{4}\cdot c_1 \cdot v_1 \qquad
\text{(ignoring $\vare^2$)}
\end{array}
\] 
Thus we obtain that $v_1=\frac{4}{4-c_1}\cdot (\frac{1}{2}+\vare)$.
Then we have $v_2=(\frac{1}{2}+\vare) \cdot c_1 \cdot v_1 = 
\frac{4}{4-c_1}\cdot c_1 \cdot (\frac{1}{2}+\vare)^2$ and then 
$v_3= \frac{4}{4-c_1}\cdot c_1 \cdot c_2 \cdot (\frac{1}{2}+\vare)^3$ and so on.
Finally we obtain $v_n$ as follows: 
$v_n= \frac{4}{4-c_1}\cdot c_1 \cdot c_2 \cdots c_{n-1} \cdot (\frac{1}{2}+\vare)^n$.
Observe that for the Markov chain with $\vare=0$, the states $s_0$ and $s_{2n}$ are
the recurrent states, and since the chain is symmetric from $s_n$ (with $\vare=0$)
the probability to reach $s_{2n}$ and $s_0$ must be equal and hence is $\frac{1}{2}$.
It follows that we must have 
$\frac{4}{4-c_1}\cdot c_1 \cdot c_2 \cdots c_{n-1} =2^{n-1}$.
Hence we have that for $\vare>0$, but very small, 
$v_n \approx \frac{1}{2} + n\cdot \vare$. 
Thus the difference with the value function when $\vare=0$ as compared to when 
$\vare>0$ but very small is $n\cdot \vare =O(|S|\cdot \vare)$.
Also observe that the Markov chain obtained for $\vare=0$ and $\frac{1}{2} > 
\vare >0$ are structurally equivalent. Thus the desired result follows.
\begin{figure*}[t]
\begin{center}
\begin{picture}(85,25)(10,0)
\node[Nmarks=n](n0)(0,12){$s_0$}
\node[Nmarks=n](n1)(20,12){$s_1$}
\node[Nmarks=n](n2)(40,12){$s_2$}
\node[Nmarks=n](n3)(70,12){$s_{2n-2}$}
\node[Nmarks=n](n4)(90,12){$s_{2n-1}$}
\node[Nmarks=n](n5)(110,12){$s_{2n}$}

\put(50,12){$\cdots$$\cdots$}

\drawloop[loopangle=180](n0){$1$}
\drawloop[loopangle=0](n5){$1$}
\drawedge[ELpos=50, ELside=l, ELdist=0.5, curvedepth=3](n1,n0){$\frac{1}{2}+\vare$}
\drawedge[ELpos=50, ELside=l, ELdist=0.5, curvedepth=3](n1,n2){$\frac{1}{2}-\vare$}
\drawedge[ELpos=50, ELside=l, ELdist=0.5, curvedepth=3](n2,n1){$\frac{1}{2}+\vare$}

\drawedge[ELpos=50, ELside=l, ELdist=0.5, curvedepth=3](n4,n5){$\frac{1}{2}-\vare$}
\drawedge[ELpos=50, ELside=l, ELdist=0.5, curvedepth=3](n4,n3){$\frac{1}{2}+\vare$}
\drawedge[ELpos=50, ELside=l, ELdist=0.5, curvedepth=3](n3,n4){$\frac{1}{2}-\vare$}

\end{picture}
\end{center}
  \caption{Markov chains for Example~2.}
  \label{figure:asym}
%%\vspace{-2em}
\end{figure*}
\qed
\end{example}

\end{document}